\documentclass[12pt]{article}

\usepackage{amsmath}
\usepackage{paralist}
\usepackage[colorlinks=true]{hyperref}
\usepackage{amssymb}
\usepackage{amsthm}
\usepackage{latexsym}
\usepackage{cite}
\usepackage{psfrag}
\usepackage{color}
\usepackage{amsfonts}
\usepackage{mathrsfs}
\usepackage{url}
\hypersetup{urlcolor=blue, citecolor=red}

\textwidth160mm \textheight205mm \oddsidemargin0mm

\makeatletter
\@addtoreset{equation}{section}
\makeatother

\newtheorem{theorem}{Theorem}[section]
\newtheorem{corollary}[theorem]{Corollary}
\newtheorem{conjecture}[theorem]{Conjecture}
\newtheorem{proposition}[theorem]{Proposition}
\newtheorem{lemma}[theorem]{Lemma}
\theoremstyle{definition}
\newtheorem{definition}[theorem]{Definition}
\newtheorem{remark}[theorem]{Remark}
\newtheorem{notation}[theorem]{Notation}
\newtheorem{example}[theorem]{Example}

\newcommand{\A}{\mathcal{A}}
\newcommand{\BB}{\mathfrak{B}}
\newcommand{\Bb}{\mathbb{B}}
\newcommand{\B}{\mathscr{B}}
\newcommand{\C}{\mathcal{C}}
\newcommand{\K}{\mathcal{K}}
\newcommand{\N}{\mathcal{N}}

\newcommand{\Nb}{\mathbb{N}}
\newcommand{\V}{\mathcal{V}}
\newcommand{\PG}{\mathrm{PG}}
\newcommand{\F}{\mathbb{F}}
\newcommand{\cb}{\mathbf{c}}
\newcommand{\e}{\mathbf{e}}
\newcommand{\vb}{\mathbf{v}}
\newcommand{\x}{\mathbf{x}}
\newcommand{\U}{\mathscr{U}}

\begin{document}

\title
{On the weight distribution of the cosets of MDS codes
\date{}
}
\maketitle

\begin{center}
{\sc Alexander A. Davydov
\footnote{A.A. Davydov ORCID \url{https://orcid.org/0000-0002-5827-4560}}
}\\
{\sc\small Institute for Information Transmission Problems (Kharkevich institute)}\\
 {\sc\small Russian Academy of Sciences}\\
 {\sc\small Moscow, 127051, Russian Federation}\\\emph {E-mail address:} adav@iitp.ru\medskip\\
 {\sc Stefano Marcugini
 \footnote{S. Marcugini ORCID \url{https://orcid.org/0000-0002-7961-0260}},
 Fernanda Pambianco
 \footnote{F. Pambianco ORCID \url{https://orcid.org/0000-0001-5476-5365}}
 }\\
 {\sc\small Department of  Mathematics  and Computer Science,  Perugia University,}\\
 {\sc\small Perugia, 06123, Italy}\\
 \emph{E-mail address:} \{stefano.marcugini, fernanda.pambianco\}@unipg.it
\end{center}

\textbf{Abstract.}
 The weight distribution of the cosets of maximum distance separable (MDS) codes is considered.
In 1990, P.G.\ Bonneau proposed a relation to obtain the full weight distribution of a coset of an  MDS code with minimum distance $d$ using  the known numbers of vectors of weights $\le d-2$ in this coset. In this paper, the Bonneau formula is transformed into a more structured and convenient form. The new version of the formula allows to consider effectively cosets of distinct weights $W$. (The weight $W$ of a coset is the smallest Hamming weight of any vector in the coset.)  For each of the considered $W$ or regions of $W$, special relations more simple than the general ones are obtained.
For the MDS code cosets of  weight $W=1$ and weight $W=d-1$ we obtain formulas of the weight distributions depending only on the code parameters.
This proves that all the cosets of weight $W=1$ (as well as $W=d-1$) have the same weight distribution. The cosets of weight $W=2$ or $W=d-2$ may have different weight distributions; in this case, we proved that the distributions are symmetrical in some sense.
The weight distributions of the cosets of MDS codes corresponding to arcs  in the projective plane $\mathrm{PG}(2,q)$ are also considered. For  MDS codes of covering radius $R=d-1$ we obtain the number of the weight $W=d-1$ cosets and their weight distribution that gives rise to a certain classification of the so-called deep holes. We show that any MDS code of covering radius $R=d-1$ is an  almost perfect multiple covering of the farthest-off points (deep holes); moreover, it  corresponds to an optimal multiple saturating set in the projective space $\mathrm{PG}(N,q)$.

\textbf{Keywords:} Cosets weight distribution, MDS codes, Arcs in the projective plane, Deep holes, Multiple coverings

\textbf{Mathematics Subject Classification (2010).} 94B05, 51E21, 51E22

\section{Introduction. The main results}
Let $\F_{q}$ be the Galois field with $q$ elements, $\F_{q}^*=\F_{q}\setminus\{0\}$. Let $\F_{q}^{n}$ be
the space of $n$-dimensional vectors over ${\mathbb{F}}_{q}$.  We denote by  $[n,k,d]_{q}R$ an $\F_q$-linear code of length $n$, dimension $k$, minimum distance~$d$, and covering radius $R$. In this notation, one may omit $R$ and call ``minimum distance''  simply ``distance''. If $d=n-k+1$, it is a maximum distance separable (MDS) code. The   generalized Reed-Solomon (GRS) codes, including generalized doubly-extended Reed-Solomon (GDRS) codes and generalized triple-extended Reed-Solomon (GTRS) codes, form an important class of MDS codes.   For an introduction to coding theory, see \cite{Blahut,HufPless,MWS,Roth}. For preliminaries, see Section \ref{subsec_MDS_Prelim}.

Let $\PG(N,q)$ be the $N$-dimensional projective space over~$\F_q$. An $n$-\emph{arc} in  $\PG(N,q)$ is a
set of $n$ points such that no $N +1$ points belong to
the same hyperplane of $\PG(N,q)$. An $n$-arc is complete if it is not contained in an $(n+1)$-arc. Arcs and MDS codes are equivalent objects, see e.g. \cite{BallLav,EtzStorm2016,LandSt,MWS}.
 For an introduction to projective spaces over finite fields and connections between  projective geometry, coding theory, and combinatorics, see \cite{Ball-book2015,BallLav,EtzStorm2016,Hirs_PGFF,HirsStor-2001,HirsThas-2015,LandSt}.

A \emph{coset} of a code is a translation of the code. A coset $\V$ of an $[n,k,d]_{q}$ code $\C$ can be represented as
\begin{align}\label{eq1_coset}
  \V=\{\x\in\F_q^n\,|\,\x=\cb+\vb,\cb\in \C\}\subset\F_q^n
\end{align}
 where $\vb\in \V$ is a vector  fixed for the given representation and $\cb$ is a codeword; see e.g. \cite{Blahut,HufPless,MWS,Roth}. The \emph{weight $W$ of a coset} is the smallest Hamming weight of any vector in the coset. For preliminaries, see Section \ref{subsec_cosets}.

 The weight distribution of the code cosets, their classification,  the number of the cosets with distinct distributions, are interesting by themselves; they are important combinatorial properties of a code. In particular, the weight distribution serves to estimate decoding results. Knowledge of the weight distributions
of the code cosets gives information on the distance distribution of the code itself. There are many papers connected with distinct aspects of the weight distribution of the code cosets, see e.g. \cite{AsmMat,Blahut,Blahut2008,BlokPelSzo,Bonneau1990,CharpHelZin,CheungIEEE1989,CheungIEEE1992,DMP_IntegrWeight1_2Cosets,DMP_CosetsRScod4,Delsarte4Fundam}, \cite[Section 6.3]{DelsarteBook}, \cite{DelsarteLeven,Helleseth}, \cite[Section 7]{HufPless}, \cite{KasLin,JurrPellik,KaipaIEEE2017}, \cite[Section 1.4.6]{Klove}, \cite{MW1963}, \cite[Sections 5.5, 6.6, 6.9]{MWS}, \cite{Schatz1980,XuXu2019,ZDK_DHRIEEE2019}, and the references therein.

In \cite{CheungIEEE1989}, for  $[n,k,d]_{q}$ MDS codes, the integrated weight distribution for the union of all cosets of weight $W\le\lfloor(d-1)/2\rfloor$ is obtained. In \cite{DMP_IntegrWeight1_2Cosets}, using results of \cite{CheungIEEE1989}, the weight distributions for the unions of all cosets of weight 1 and all cosets of weight~2 of MDS codes are given. In \cite{DMP_CosetsRScod4}, the weight distributions of all cosets  (without any unions) of the specific $[q+1,q-3,5]_q3$ GDRS code are obtained.

 Overall, in the literature, to the best of authors' knowledge, for non-binary MDS codes, sporadic or infinite families, the weight distributions of the code cosets (without any unions) are not considered, apart from the case in \cite{DMP_CosetsRScod4}.

\emph{In this paper}, some steps are made to fill this gap.

In a few works, see e.g. \cite{Bonneau1990,Delsarte4Fundam,HufPless,MWS}, it is shown that for an MDS code of distance~$d$, the weight distribution of any coset is uniquely determined if, in the coset, the numbers of vectors of weights $1,2,\ldots,d-2$ are known. Methods to obtain the weight distribution of the coset using these $d-2$ known numbers are considered in \cite{Bonneau1990}, \cite[Section 7]{HufPless}. The approach of \cite{HufPless} can be used for distinct codes, not necessary MDS.

For MDS codes, the approach of \cite{Bonneau1990} is simpler that the one of \cite{HufPless}. In Bonneau paper \cite{Bonneau1990}, it is proposed a formula which is a remarkable direct relation between the $d-2$ known numbers of vectors in a coset and the full weight distribution of this cosets, see~\eqref{eq2_wd_cosetBon}--\eqref{eq2_Bw2Bon} in Section \ref{subsec_MDS_Prelim}. It seems the method of \cite{Bonneau1990} is currently the most convenient and productive tool for obtaining the weight distributions of the cosets of MDS codes.

However, in the literature, as far as it is known to the authors, the Bonneau formula is not developed and applied. Note also that the Bonneau formula seems to be slightly ``non-transparent''; its application to specific cases not always gives impressive results. Also, its complexity of calculations is relatively  great.

\emph{In this paper}, we transform the formula of \cite{Bonneau1990} into a simpler form. The new form allowed us to make further simplifications for distinct specific cases.  Then we apply the improved tools to
obtain the weight distributions of the cosets of distinct MDS codes and obtain a number of new results.

The new version of the formula of \cite{Bonneau1990} is given in Theorem \ref{th1_BonTrans}, see Section~\ref{sec_transform}.

\begin{theorem}\label{th1_BonTrans} \textbf{(Bonneau formula transformed)}
Let $\C$ be an $[n,k,d]_q$ MDS code of distance $d\ge2$. Let $\V$ be one of its cosets. Let $A_w(\C)$ be the number of weight $w$ codewords of $\C$. Let $B_w(\V)$ be the number of weight $w$ vectors in the coset~$\V$. Assume that all the values of $B_v(\V)$ with $0\le v\le d-2$ are known.  Then, for $w\ge d-1$, the weight distribution of\/ $\V$ is as follows:
\begin{align}\label{eq1_BonGenTransf}
&B_w(\V)=A_w(\C)-\Omega_w^{(0)}(n,d)+\sum_{v=0}^{d-2}\Omega_w^{(v)}(n,d)B_v(\V),~w=d-1,d,\ldots,n,
\end{align}
where
\begin{align}
&\Omega_w^{(v)}(n,d)=(-1)^{w-d}\binom{n-v}{w-v}\binom{w-1-v}{d-2-v},~d\ge2.\label{eq1_Omega}
\end{align}
\end{theorem}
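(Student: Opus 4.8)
The plan is to sidestep the rather opaque original relation of \cite{Bonneau1990} and instead establish \eqref{eq1_BonGenTransf}--\eqref{eq1_Omega} directly, by inclusion--exclusion over coordinate sets, invoking only the characteristic shortening property of MDS codes together with the trivial facts $A_0(\C)=1$ and $A_i(\C)=0$ for $1\le i\le d-1$; the structured shape of $\Omega_w^{(v)}(\C)$ will then fall out of two standard binomial identities. One may equivalently take the formula of \cite{Bonneau1990} as the starting point and merely rearrange it, but the obstacle there is only reconciling index conventions, and the argument below is in effect that rearrangement carried out intrinsically.

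First I would introduce, for $T\subseteq\{1,\dots,n\}$, the count $g_\V(T)=\#\{\x\in\V:\mathrm{supp}(\x)\subseteq T\}$ and isolate the dichotomy that drives the proof. If $|T|\ge d-1$, then $|T^c|\le k$, so the restriction of $\C$ to the coordinates of $T^c$ maps onto $\F_q^{T^c}$ (any $k$ columns of a generator matrix of an MDS code are independent); hence the vectors of $\V$ vanishing on $T^c$ form a coset of $\{\cb\in\C:\cb|_{T^c}=0\}$, a space of $q^{k-|T^c|}=q^{|T|-d+1}$ elements, so $g_\V(T)=q^{|T|-d+1}$ for \emph{every} coset $\V$. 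If instead $|T|\le d-2$, a vector of $\V$ supported in $T$ has weight at most $d-2$, and counting the pairs $(\x,T)$ gives $\sum_{|T|=j}g_\V(T)=\sum_{i=0}^{j}\binom{n-i}{j-i}B_i(\V)$ for $0\le j\le d-2$.

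Next I would apply M\"obius inversion over the subset lattice to convert support counts into weight counts: with $\sigma_j=\sum_{|T|=j}g_\V(T)$ one has $B_w(\V)=\sum_{j=0}^{w}(-1)^{w-j}\binom{n-j}{w-j}\sigma_j$. Splitting at $j=d-1$ and substituting the two evaluations of $\sigma_j$ yields $B_w(\V)=\Sigma_{\mathrm{low}}+\Sigma_{\mathrm{high}}$, where $\Sigma_{\mathrm{high}}=\sum_{j=d-1}^{w}(-1)^{w-j}\binom{n-j}{w-j}\binom{n}{j}q^{j-d+1}$ depends on $\C$ only through $n,d,q$, and, after interchanging summations, $\Sigma_{\mathrm{low}}=\sum_{i=0}^{d-2}B_i(\V)\sum_{j=i}^{d-2}(-1)^{w-j}\binom{n-j}{w-j}\binom{n-i}{j-i}$. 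I would collapse the inner sum using the subset-of-a-subset identity $\binom{n-i}{m}\binom{n-i-m}{w-i-m}=\binom{n-i}{w-i}\binom{w-i}{m}$ and then the partial alternating-sum identity $\sum_{m=0}^{M}(-1)^m\binom{b}{m}=(-1)^{M}\binom{b-1}{M}$ with $b=w-i$ and $M=d-2-i$; tracking the accumulated sign $(-1)^{w-i}(-1)^{d-2-i}=(-1)^{w-d}$ shows the inner sum equals exactly $\Omega_w^{(i)}(\C)$, so $B_w(\V)=\sum_{i=0}^{d-2}\Omega_w^{(i)}(\C)B_i(\V)+\Sigma_{\mathrm{high}}$ for every coset. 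Finally, evaluating this identity at the zero coset $\V=\C$, where $B_0=A_0(\C)=1$ and $B_i=A_i(\C)=0$ for $1\le i\le d-2$, collapses the sum to $\Omega_w^{(0)}(\C)$ and gives $A_w(\C)=\Omega_w^{(0)}(\C)+\Sigma_{\mathrm{high}}$, i.e. $\Sigma_{\mathrm{high}}=A_w(\C)-\Omega_w^{(0)}(\C)$, which is precisely \eqref{eq1_BonGenTransf}.

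Every step is elementary; the main obstacle is purely combinatorial bookkeeping — keeping the hypothesis $w\ge d-1$ so that the index ranges are legitimate and $b=w-i\ge 1$ in the alternating-sum identity, and extracting the sign $(-1)^{w-d}$ cleanly from the two nested sums. There is no genuine conceptual obstacle once one observes the dichotomy: the sole use of the MDS hypothesis is that $g_\V(T)$ takes the coset-independent value $q^{|T|-d+1}$ the moment $|T|\ge d-1$, and this is exactly what makes the high-weight part of the distribution universal and lets the prescribed numbers $B_0(\V),\dots,B_{d-2}(\V)$ determine the rest.
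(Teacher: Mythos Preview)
Your argument is correct, but it is not the route the paper takes. The paper starts from Bonneau's formula \eqref{eq2_wd_cosetBon}--\eqref{eq2_Bw2Bon} as a given and rewrites its two pieces separately: Lemma~\ref{lem3_Bw1} massages $\BB_{w,1}(\V)$ into $A_w(\C)-\Omega_w^{(0)}(\C)$ by invoking the explicit MDS weight enumerator \eqref{eq2_wd_MDS}, and Lemma~\ref{lem3_Bw2} reorders the double sum in $\BB_{w,2}(\V)$ to extract $\sum_v\Omega_w^{(v)}(\C)B_v(\V)$. You instead rederive the whole thing from scratch by M\"obius inversion over coordinate subsets, using only the MDS shortening property $g_\V(T)=q^{|T|-d+1}$ for $|T|\ge d-1$. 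Your split $\Sigma_{\mathrm{low}}+\Sigma_{\mathrm{high}}$ is, after the reindexing $\binom{n}{j}\binom{n-j}{w-j}=\binom{n}{w}\binom{w}{j}$, exactly Bonneau's $\BB_{w,2}+\BB_{w,1}$, and your collapse of the inner sum in $\Sigma_{\mathrm{low}}$ uses the same identities \eqref{eq3_Riordan_ident2}--\eqref{eq3_Riordan_ident3} as Lemma~\ref{lem3_Bw2}. The genuine difference is in handling $\Sigma_{\mathrm{high}}$: rather than computing it against \eqref{eq2_wd_MDS} as the paper does, you specialize the identity to the zero coset $\V=\C$ and read off $\Sigma_{\mathrm{high}}=A_w(\C)-\Omega_w^{(0)}(\C)$ for free. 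That trick is neat --- it makes the proof self-contained, bypassing both the citation of \cite{Bonneau1990} and the explicit weight enumerator --- while the paper's version has the compensating virtue of exhibiting precisely how the new formula is a reorganization of Bonneau's original components.
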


The new version of the formula seems to be more structured and more convenient for applying than the original one, also it requires less calculations, see Remarks \ref{rem3:comp} and \ref{rem4:comp} for detail.

Using the new version \eqref{eq1_BonGenTransf}, \eqref{eq1_Omega}, we consider separately cosets of weights $W$ with $W=1,d-1,d-2$, and $2\le W\le \lfloor(d-1)/2\rfloor$. Such a consideration is absent in the literature, as far as it is known to the authors. For each of these $W$ or regions of $W$ we obtain special relations simpler than \eqref{eq1_BonGenTransf}, \eqref{eq1_Omega}. In turn, the results of Section \ref{sec_spec} are widely used in Sections \ref{sec_wd2}--\ref{sec_multip}.

For the cosets of weights $W=1$ and  $W=d-1$ of an $[n,k,d]_q$ MDS code, we \emph{obtain formulas of the weight distributions depending only on the code parameters $n,d,q$.} These formulas are a new and useful result. In particular, they prove that all the MDS code cosets of weight $1$ (as well as $d-1$) have the same weight distribution, see Corollary \ref{cor4:W1d-1}.
The obtained weight distribution of the weight $d-1$ cosets is the base of  new results on multiple coverings and deep holes, see Section~\ref{sec_multip}.

The weight $d-2$ cosets may have different weight distributions; in this case, the distributions, as we prove, are symmetrical in some sense.

The weight distribution of the weight 2 cosets
 of MDS codes of distance $d\ge5$ is considered in detail, see Section \ref{sec_wd2}. We show that the  distribution  is uniquely determined by the number
  of weight $d-2$ vectors in this coset.
    Formulas for the weight distributions are obtained. A necessary condition for equality of weight distributions is proved. Again, as well as for weight $d-2$ cosets, we prove that different weight distributions of weight 2 cosets are symmetrical.

The \emph{symmetry of different weight distributions} of weight 2 and weight $d-2$ cosets of MDS codes, proved in this paper, is an unexpected and interesting new result.

Also, we consider the weight distribution of the cosets of MDS codes of distance $d=4$, see Section \ref{sec_d=3,4}. These codes are interesting for theory and practice. We use their connections with the conics and  hyperovals in the plane $\PG(2,q)$. New properties of shortened conics, needed for the coset weight distributions, are obtained.

In coding theory, \emph{farthest-off points} or \emph{deep holes}, i.e. vectors of $\F_{q}^{n}$ lying at distance~$R$ from an $[n,k,d]_{q}R$ code, play an important role. There are useful relations between the deep holes and the bounds on the size of the lists in the list decoding~ of GRS codes, see e.g. \cite[Sections 4.3,4.8]{Blahut2008}, \cite{HongWu2016,JusHoh2001,KaipaIEEE2017}, \cite[Chapter 9]{Roth}, \cite{XuXu2019,ZDK_DHRIEEE2019} and the references therein.

In this context, linear multiple covering codes, called \emph{multiple coverings of the farthest-off points \emph{(}MCF
codes for short\emph{)}} or \emph{ multiple coverings of the deep holes}, are of great interest. For an introduction to this topic, including a one-to-one correspondence between MCF codes and multiple saturating sets in the spaces $\PG(N,q)$, see \cite{BDGMP_MultCov,BDGMP_MultCovFurth,BDMP_TwistCub,CHLL_CovCodBook} and the references therein. For preliminaries with  the definitions of perfect  and almost perfect MCF codes, see  Section~\ref{subsec:multip_prelim}.

We show, see Section \ref{subsec:multip_newres}, that any $[n,k,d]_{q}R$ MDS code of maximal possible covering radius $R=d-1$ is an  almost perfect  MCF code such that for each farthest-off vector $\x\in\F_q^n$ there are exactly $\binom{n}{d-1}$ codewords at Hamming distance $d-1$ from $\x$. It is an important result, as MDS codes of covering radius $R=d-1$ are a wide class of codes, see Theorem \ref{th7_R=d-1}.

The number of codewords at distance $R$\/ from a deep hole, the weight distribution of the farthest-off cosets of weight $d-1$, and the number of such cosets and its estimate  given in Section \ref{sec_multip}, can be considered as the \emph{classification of the farthest-off vectors \emph{(}deep holes\emph{)}} of MDS codes.

Note that MDS codes of covering radius $R=d-1$ provide almost perfect  MCF codes with any covering radius $R$ and the corresponding optimal multiple $(\rho,\mu)$-saturating sets with any parameter $\rho$, see Section \ref{sec_multip}. In the literature, as far as it is known to the authors, almost perfect  MCF codes with $R>3$ and optimal multiple $(\rho,\mu)$-saturating sets with $\rho>2$ are not described.

So, we improve the known formula for obtaining the weight distributions of the cosets of MDS codes, consider these distributions for a number codes, and also obtain new results on related problems connected with  farthest-off vectors (deep holes). The results obtained extend the knowledge on cosets of MDS codes.

 The paper is organized as follows. Section \ref{sec_prelimin} contains preliminaries.
 In Section~\ref{sec_transform} we transform the original Bonneau formula \eqref{eq2_wd_cosetBon}--\eqref{eq2_Bw2Bon} to the new version \eqref{eq1_BonGenTransf}, \eqref{eq1_Omega}. In Section~\ref{sec_spec}, we consider specific cases of the weight distribution of the cosets of MDS codes. In Section~\ref{sec_wd2}, the weight distribution of the weight~2 cosets of MDS codes of distance $d\ge5$ is explored. In Section \ref{sec_d=3,4},  the weight distribution of the cosets of MDS codes of distance $d=4$ is investigated. In Section~\ref{sec_multip}, we study MDS codes as multiple coverings of the deep holes and the corresponding multiple saturating sets.

\section{Preliminaries}\label{sec_prelimin}
We introduce notations and remind known definitions and properties of linear codes, their cosets,  and the normal rational curves in the spaces $\PG(N,q)$, see  \cite{Ball-book2015,BallLav,BDMP_AlmComplArc,BartGiulPlat,Blahut,Blahut2008,Bonneau1990,CheungIEEE1989,CheungIEEE1990,CheungIEEE1992,%
DMP_IntegrWeight1_2Cosets,Delsarte4Fundam,EzerGrasSoleMDS2011,GabKl,HufPless,MWS,Roth} and the references therein.

\subsection{Cosets of a linear code}\label{subsec_cosets}
 \begin{notation}\label{not1}
 For an $[n,k,d]_{q}$ code $\C$ and its cosets $\V$ of the form \eqref{eq1_coset}, we use the following notations and definitions:
\begin{align*}
&t(\C)=\left\lfloor(d-1)/2\right\rfloor&&\text{the number of errors correctable by the code }\C;\displaybreak[3]\\
&\text{weight of a vector}&&\text{Hamming weight of the vector;}\displaybreak[3]\\
&wt(\x)&&\text{the Hamming weight of a vector  $\x\in\F_q^n$;}\displaybreak[3]\\
&\#M&&\text{the cardinality of a set }M;\displaybreak[3]\\
&A_w(\C)&&\text{the number of weight $w$ codewords of the code $\C$;}\displaybreak[3]\\
&S(\C)&&\text{the set of non-zero weights in}\,\C;S(\C)=\{w>0|A_w(\C)\ne0\};\displaybreak[3]\\
&s(\C)=\#S(\C)&&\text{the number of non-zero weights in }\C;\displaybreak[3]\\
&\C^\bot&&\text{the $[n,n-k,d^\bot]_q$ code dual to $\C$;}\displaybreak[3]\\
&\cb_w&&\text{weight $w$ codeword of $\C$};~ wt(\cb_w)=w;\displaybreak[3]\\
&B_w(\V)&&\text{the number of weight $w$ vectors in the coset }\V;\displaybreak[3]\\
&\text{a coset leader}&&\text{a vector in the coset having the smallest Hamming weight;}\displaybreak[3]\\
&\text{the weight of a coset}&&\text{the smallest Hamming weight of any vector in the coset;}\displaybreak[3]\\
&\V^{(W)}&&\text{a coset of weight }W;~~B_w(\V^{(W)})=0\text{ if }w<W;\displaybreak[3]\\
&\Nb^{(W)}_\Sigma(\C)&&\text{the total number of the cosets of weight $W$ of the code }\C;\displaybreak[3]\\
&\B_w^{\Sigma}(\V^{(W)})&&\text{the overall number of weight $w$ vectors in all the cosets}\displaybreak[3]\\
&&&\text{of weight }W;\displaybreak[3]\\
&\vb_w&&\text{weight $w$ vector of $\F_q^{n}$};~ wt(\vb_w)=w;\displaybreak[3]\\
&H(\C)&&\text{an $(n-k)\times n$ parity check matrix of }\C;\displaybreak[3]\\
&tr&&\text{sign of the transposition};\displaybreak[3]\\
&H(\C)\x^{tr}&&\text{the \emph{syndrome} of a vector }\x\in\F_q^n,~~H(\C)\x^{tr}\in \F_q^{n-k};\displaybreak[3]\\
&d(\x,\cb)&&\text{the Hamming distance between a vector $\x$ and }\displaybreak[3]\\
&&&\text{a vector $\cb$ of }\F_{q}^{n};\displaybreak[3]\\
&d(\x,\C)=\min\limits_{\cb\in \C}d(\x,\cb)&&\text{the Hamming distance between a vector }\x\in\F_q^n \text{ and }\C;\displaybreak[3]\\
&f_\delta(\x,\C)&&\text{for vector $\x\in\F_q^n$, the number of codewords $\cb\in\C$ such}\displaybreak[3]\\
&&&\text{that }d(\x,\cb)=\delta.
\end{align*}
\end{notation}

 For a coset $\V^{(W)}$ of weight $W$ the number of all coset leaders is $B_W(\V^{(W)})$.
 If $W\leq t(\C)$ we have a unique leader and $B_W(\V^{(W)})=1$. The code $\C$  is the coset of weight zero.
  If $W> t(\C)$, then $B_W(\V^{(W)})\ge1$, i.e. a  vector of minimal weight is not necessarily unique.
 Also, for an $[n,k,d]_{q}$ code $\C$ the following holds:
 \begin{align}\label{eq2:NWSigma}
  \Nb^{(W)}_\Sigma(\C)=\binom{n}{W}(q-1)^W \text{ if }W\leq t(\C)=\left\lfloor\frac{d-1}{2}\right\rfloor.
 \end{align}

 All vectors in a code coset have the same syndrome; it is called the \emph{coset syndrome}. There is a one-to-one correspondence between cosets and syndromes.
 The syndrome of $\C$ is the zero vector of $ \F_q^{n-k}$.

The covering radius of an
$[n,k,d]_{q}$ code $\C$ is the least integer $R$ such that the space $\F_{q}^{n}$ is covered by
Hamming spheres of radius $R$ centered at the codewords. Every column of $\F_{q}^{n-k}$ is equal to a linear combination of at most $R$ columns of a parity check matrix of $\C$.
 The covering radius $R$ of the code $\C$ is equal to the maximum weight of a coset of $\C$.
 For an $[n,k,d]_qR$ MDS code we have $R\le d-1$ \cite{BartGiulPlat,GabKl}.

 \begin{theorem}\label{th2_HP_coset}
\begin{description}
      \item[(i)] \cite[Theorem 7.5.2]{HufPless} Each weight $s(\C^\bot)$ coset of $\C$ has the same weight distribution.

  \item[(ii)] \cite[Theorem 7.5.2]{HufPless}, \cite[Theorem 6.20]{MWS}
  For a code $\C$, the weight distribution of any coset of weight $<s(\C^\bot)$ is uniquely determined if, in the coset, the numbers
of vectors of weights $1,2,\ldots,s(\C^\bot)-1$ are known.
\item[(iii)]  \cite{Bonneau1990,Delsarte4Fundam} For an  MDS code of distance $d$,  the weight distribution of a coset is uniquely determined if, in the coset, the numbers
of vectors of weights $1,2,\ldots,d-2$ are known.
\end{description}
\end{theorem}

Note that using Theorem \ref{th2_WD_MDS} of Section \ref{subsec_MDS_Prelim}, one can show that, in Theorem \ref{th2_HP_coset}, the point (iii) is a special case of (ii).

For a fixed~$W$, we call the set $\{\B_w^{\Sigma}(\V^{(W)})|w=0,1,\ldots,n\}$
\emph{integral weight spectrum} of the code cosets of weight $W$, see \cite{CheungIEEE1989,CheungIEEE1992,DMP_IntegrWeight1_2Cosets} and the references therein.

The integral weight spectrum matters,  in particular, for  estimates of the so-called ``\emph{bounded distance decoder}'' \cite[Section~4.3]{Blahut2008},  \cite{CheungIEEE1989,CheungIEEE1992}.

The decoder acts as follows. For an $[n,k,d]_{q}$ code $\C$, let $\cb\in \C$ be a sent word, $\e$ be an
error vector, and $\x=\cb+\e$ be the received word. Let $\tau$ be some given distance. If $\x$ belongs to a coset of weight $>\tau$, the decoder declares the detection of an uncorrectable error. If $\x$ belongs  to a coset of weight $\le \tau$, the decoder puts the coset leader as an error vector $\e$, if the leader is unique, otherwise it forms a list of all the leaders as possible errors. This leads to incorrect decoding if, in fact, $wt(\e)>\tau$. Thus, all vectors of weight $>\tau$ belonging to the cosets of weight $\le \tau$ (resp. $> \tau$) give rise to incorrect decoding (resp. to detection of an uncorrectable error). If $\tau=t(\C)=\lfloor(d-1)/2\rfloor$, the decoder is called the \emph{decoder up to half of minimum distance}.
For the corresponding probabilities, see  \cite{CheungIEEE1989,CheungIEEE1992}.

\subsection{MDS, GDRS, GTRS, and Hamming codes; normal rational curves}\label{subsec_MDS_Prelim}

\begin{theorem}\label{th2_WD_MDS}
 \cite[Theorems 6, 10]{EzerGrasSoleMDS2011} Any $[n\le q,k,n-k+1]_q$ MDS code  has $k$ nonzero weights.
  A $[q+1,k,q+2-k]_q$ MDS code with $k\ne2$ has $k$ nonzero weights.
\end{theorem}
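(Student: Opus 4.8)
An $[n,k,n-k+1]_q$ MDS code $\C$ has minimum distance $d=n-k+1$, so every nonzero codeword has weight in $\{n-k+1,n-k+2,\dots,n\}$, a set of exactly $k$ values; hence $s(\C)\le k$, and the whole content is that \emph{each} of these weights actually occurs (save for the stated exception). Geometrically, realizing $\C$ by an $n$-arc $\A$ in $\PG(k-1,q)$ (the columns of a generator matrix $G$), the weight of a codeword $\mathbf{m}G$ equals $n$ minus the number of points of $\A$ lying on the hyperplane $\mathbf{m}^{\bot}$, and every hyperplane of $\PG(k-1,q)$ arises, each for $q-1$ choices of $\mathbf m$; so the claim is equivalent to: for every $t\in\{0,1,\dots,k-1\}$ there is a hyperplane meeting $\A$ in exactly $t$ points, and then $A_{n-t}(\C)=(q-1)\cdot\#\{\text{such hyperplanes}\}$. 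For $n\le q$ one can prove this geometrically by a union-bound count of the hyperplanes through the span of $t$ fixed arc points; the uniform route that also handles $n=q+1$, however, is to use the classical weight enumerator of an MDS code and check positivity.

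Concretely, the plan is to start from the (classical; see e.g.\ \cite{MWS}) formula
\begin{align*}
A_{n-t}(\C)=(q-1)\binom{n}{t}\sum_{i=0}^{k-1-t}(-1)^{i}\binom{n-t-1}{i}\,q^{\,k-1-t-i},\qquad 0\le t\le k-1,
\end{align*}
and to show the alternating sum on the right is strictly positive in this range. For $t=k-1$ it is $1$. For $t<k-1$, factor out $q^{k-1-t}$ to rewrite the sum as $\sum_{i=0}^{m}(-1)^{i}a_{i}$ with $a_{i}=\binom{N}{i}q^{-i}>0$, where $N=n-t-1$ and $m=k-1-t\ge1$; the consecutive ratio is $a_{i+1}/a_{i}=(N-i)/((i+1)q)$. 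Since $N=n-t-1\le n-1\le q$, this ratio is $<1$ for every $i\ge1$, and also for $i=0$ unless $N=q$, i.e.\ unless $n=q+1$ and $t=0$. When the $a_{i}$ are strictly decreasing and positive, every partial sum of $\sum(-1)^{i}a_{i}$ is positive, so $A_{n-t}(\C)>0$; this is the routine part of the argument.

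It then remains to handle the single borderline case, which is exactly the exceptional one. If $n\le q$, then $N\le q-1<q$ for all $t$, so the $a_{i}$ strictly decrease and $A_{n-t}(\C)>0$ for every $t\in\{0,\dots,k-1\}$: all $k$ weights occur. If $n=q+1$, the argument above still settles every $t$ with $1\le t\le k-1$; for $t=0$ we get $N=q$, hence $a_{0}=a_{1}=1$ and $a_{1}>a_{2}>\cdots$, so $\sum_{i=0}^{m}(-1)^{i}a_{i}=(a_{0}-a_{1})+\sum_{i=2}^{m}(-1)^{i}a_{i}=\sum_{i=2}^{m}(-1)^{i}a_{i}$, which is strictly positive when $m\ge2$ (a strictly decreasing positive alternating tail starting at $+a_{2}$) and vanishes when $m=1$. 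Here $m=k-1$, so for $k\ge3$ the weight $n$ still occurs and, with the $t\ge1$ cases, $\C$ has $k$ nonzero weights; for $k=2$ one gets $A_{n}(\C)=0$ while $A_{n-1}(\C)=\binom{n}{1}(q-1)>0$, so $\C$ has only $k-1=1$ nonzero weight — equivalently, the $(q+1)$-arc in $\PG(1,q)$ is the whole line and has no external hyperplane. The only genuine subtlety is getting this one monotonicity borderline right; that it collapses precisely at $k=2$ is the clean match with the stated exception.
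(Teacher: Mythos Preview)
Your argument is correct, but note that the paper does not actually prove this statement: Theorem~\ref{th2_WD_MDS} is quoted verbatim from \cite{EzerGrasSoleMDS2011} (their Theorems~6 and~10) as a background result, with no proof given here. So there is no in-paper proof to compare against.

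For what it is worth, your approach---writing the MDS weight enumerator in the form $A_{n-t}(\C)=(q-1)\binom{n}{t}\sum_{i=0}^{k-1-t}(-1)^{i}\binom{n-t-1}{i}q^{\,k-1-t-i}$ and proving positivity of the alternating sum via the strict decrease of the terms $a_i=\binom{n-t-1}{i}q^{-i}$---is clean and self-contained; the consecutive-ratio bound $(N-i)/((i+1)q)<1$ does exactly isolate the single degenerate case $n=q+1$, $t=0$, where $a_0=a_1$ and the sum collapses to the tail from $a_2$, vanishing precisely when $k=2$. The geometric remark (an $n$-arc in $\PG(1,q)$ with $n=q+1$ is the whole line, so no point lies off it) is a nice sanity check for that exception.
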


\begin{theorem}\label{th2_Bon} \textbf{(Bonneau formula)} \cite{Bonneau1990}
Let $\C$ be an $[n,k,d=n-k+1]_q$ MDS code of distance $d\ge2$. Let $\V$ be one of its cosets.  Assume that all the values of $B_v(\V)$ with $0\le v\le d-2$ are known. Then, for $w\ge d-1$, the weight distribution of\/ $\V$ is as follows:
\begin{align}
 &B_{w}(\V)=\BB_{w,1}(\V)+\BB_{w,2}(\V),~w\ge d-1,~d\ge2,\displaybreak[3]\label{eq2_wd_cosetBon}
 \end{align}
 where
 \begin{align}
& \BB_{w,1}(\V)=\binom{n}{w}\sum_{j=0}^{w-d+1}(-1)^j\binom{w}{j}q^{w-d+1-j},\displaybreak[3]\label{eq2_Bw1Bon}\\
&\BB_{w,2}(\V)=\sum_{j=w-d+2}^w(-1)^j\sum_{v=0}^{w-j}\binom{j+n-w}{j}\binom{n-v}{w-j-v}B_v(\V).\label{eq2_Bw2Bon}
\end{align}
\end{theorem}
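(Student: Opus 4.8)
The plan is to count, for each coordinate set, the vectors of $\V$ supported on it, exploit the MDS structure of the parity check matrix $H=H(\C)$, and then recover $B_w(\V)$ by inclusion--exclusion. For a subset $S\subseteq\{1,\dots,n\}$ write $M_S$ for the number of $\x\in\V$ with $\mathrm{supp}(\x)\subseteq S$ and $N_S$ for the number of $\x\in\V$ with $\mathrm{supp}(\x)=S$; thus $M_S=\sum_{R\subseteq S}N_R$ and $B_w(\V)=\sum_{|S|=w}N_S$. Fixing the coset syndrome $\mathbf{s}=H\x^{tr}\in\F_q^{n-k}$ (the same for every $\x\in\V$), the number $M_S$ equals the number of solutions $\mathbf{y}\in\F_q^{|S|}$ of the affine system $H_S\mathbf{y}^{tr}=\mathbf{s}$, where $H_S$ is the $(n-k)\times|S|$ submatrix of $H$ with columns indexed by $S$.

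First I would isolate the one place where the MDS hypothesis enters. Since $d=n-k+1$, every $n-k=d-1$ columns of $H$ are linearly independent, so $H_S$ has full row rank $n-k$ whenever $|S|\ge d-1$. For such $S$ the system is consistent for every $\mathbf{s}$ and its solution set is a coset of a space of dimension $|S|-(n-k)$, giving
\[
M_S=q^{|S|-d+1}\quad\text{for all }|S|\ge d-1,
\]
a value independent of the coset (at the boundary $|S|=d-1$ this is $q^0=1$). For $|S|\le d-2$ the columns are independent but do not span $\F_q^{n-k}$, so $M_S\in\{0,1\}$ and this is exactly the coset-dependent information encoded by the given numbers $B_v(\V)$, $v\le d-2$.

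Next I would invert. Möbius inversion on the Boolean lattice gives $N_S=\sum_{R\subseteq S}(-1)^{|S|-|R|}M_R$, and summing over $|S|=w$ while collecting terms by $|R|=v$ (each such $R$ lying in $\binom{n-v}{w-v}$ sets $S$ of size $w$) yields
\[
B_w(\V)=\sum_{v=0}^{w}(-1)^{w-v}\binom{n-v}{w-v}P_v,\qquad P_v:=\sum_{|S|=v}M_S.
\]
I would then split this sum at $v=d-1$. For $v\ge d-1$ the structural fact gives $P_v=\binom{n}{v}q^{v-d+1}$, depending only on $n,d,q$; substituting $v=w-j$ and using the elementary identity $\binom{n-v}{w-v}\binom{n}{v}=\binom{n}{w}\binom{w}{v}$ collapses this partial sum precisely to $\BB_{w,1}(\V)$. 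For $v\le d-2$ the relation $P_v=\sum_{|S|=v}\sum_{R\subseteq S}N_R=\sum_{r=0}^{v}\binom{n-r}{v-r}B_r(\V)$ expresses $P_v$ through the known low-weight counts; substituting and reindexing by $j=w-v$ (so $j$ runs from $w-d+2$ to $w$) turns the remaining partial sum into $\BB_{w,2}(\V)$ verbatim.

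The main obstacle is the structural step of the second paragraph: one must argue that $M_S$ takes the uniform value $q^{|S|-d+1}$ for every $|S|\ge d-1$ and every coset, i.e.\ that the restricted system is always consistent with a solution space of the predicted dimension. This is where the MDS property is indispensable, and it is also the reason the input data can be truncated at weight $d-2$ --- precisely the weights at which $M_S$ ceases to be determined by $n,d,q$ alone. The remaining work is the two binomial simplifications, which are routine once the identity $\binom{n-v}{w-v}\binom{n}{v}=\binom{n}{w}\binom{w}{v}$ and the change of index $j=w-v$ are in hand.
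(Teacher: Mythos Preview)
The paper does not give its own proof of this statement: Theorem~\ref{th2_Bon} is presented in the preliminaries with a bare citation to \cite{Bonneau1990}, and the paper then takes the formula \eqref{eq2_wd_cosetBon}--\eqref{eq2_Bw2Bon} as a starting point for the transformation in Section~\ref{sec_transform}. So there is nothing in the paper to compare against.

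That said, your argument is correct and is essentially the standard route to the Bonneau formula. The key structural step---that $M_S=q^{|S|-d+1}$ for every $|S|\ge d-1$ because any $d-1$ columns of $H$ are independent---is exactly right, and the M\"obius inversion and splitting at $v=d-1$ produce the two summands cleanly. I checked both reductions: with $j=w-v$ and the identity $\binom{n}{v}\binom{n-v}{w-v}=\binom{n}{w}\binom{w}{v}$ (the paper's \eqref{eq3_Riordan_ident2}) the ``large~$v$'' part collapses to $\BB_{w,1}(\V)$ verbatim, and with the same substitution the ``small~$v$'' part, after inserting $P_v=\sum_{r\le v}\binom{n-r}{v-r}B_r(\V)$, becomes $\BB_{w,2}(\V)$ exactly as stated. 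One small stylistic remark: in the final paragraph you call the consistency of $H_S\mathbf{y}^{tr}=\mathbf{s}$ for $|S|\ge d-1$ the ``main obstacle'', but in fact once you have said that $H_S$ has full row rank this is immediate; the genuine content is all in the MDS hypothesis, which you have already invoked.
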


For $w\ge d$, the weight distribution  of an $[n,k,d=n-k+1]_q$ MDS code $\C$ has the following form, see e.g. \cite[Theorem 7.4.1]{HufPless}, \cite[Theorem 11.3.6]{MWS}:
\begin{align}\label{eq2_wd_MDS}
 A_{w}(\C)=\binom{n}{w}\sum_{j=0}^{w-d}(-1)^j\binom{w}{j}(q^{w-d+1-j}-1).
\end{align}

For $q$ odd and even, a $(d-1)\times (q+1)$ parity check matrix $H_d$ of the $[q+1,q+2-d,d]_q$ GDRS code with $d\ge3$ can be represented \cite[Section 5.1]{Roth} as
\begin{align}\label{eq2_HDRS}
&H_d=\left[ \begin{array}{@{}cccccc@{}}
        1        &1        &\ldots&1&1        &0      \\
        \alpha_1      &\alpha_2      &\ldots&\alpha_{q-1}&0      &0     \smallskip \\
        \alpha_1^2    &\alpha_2^2    &\ldots&\alpha_{q-1}^2&0    &0     \\
        \ldots   &\ldots   &\ldots&\ldots   &\ldots&\ldots\\
        \alpha_1^{d-3}&\alpha_2^{d-3}&\ldots&\alpha_{q-1}^{d-3}&0&0     \smallskip  \\
        \alpha_1^{d-2}&\alpha_2^{d-2}&\ldots&\alpha_{q-1}^{d-2}&0&1       \\
       \end{array}\right]
       \left[\begin{array}{@{}ccccc@{}}
         v_1 & 0 & \ldots & 0 & 0 \\
         0 & v_2 & \ldots & 0 & 0 \\
        \ldots & \ldots & \ldots & \ldots & \ldots \\
         0 & 0 & \ldots& v_{q} & 0 \\
         0 & 0 & \ldots & 0 & v_{q+1}
                \end{array}\right],
\end{align}
where $\alpha_i\in\F_{q}^*$, $\alpha_i\ne \alpha_j$ if $i\ne j$, $v_i\in\F_q^*$, the elements $v_i$ do not have to be distinct.

If, from $H_d$, we remove the column $[0,0,\ldots,v_{q+1}]^{tr}$,  we obtain a parity check matrix of the \emph{singly-extended } $[q,q+1-d,d]_qR$ GRS code with $R=d-1$. If we remove the columns $[0,0,\ldots,v_{q+1}]^{tr}$, $[v_{q},0,\ldots,0]^{tr}$, and also $\delta\ge0$ other columns, we obtain a parity check  matrix of a $[q-1-\delta,q-d-\delta,d]_qR$ GRS code, $R=d-1$.

For $q$ even, a $3\times (q+2)$ parity check matrix $\widetilde{H_4}$ of the $[q+2,q-1,4]_q2$ GTRS code can be represented as
\begin{align}\label{eq2_HTRS}
&\widetilde{H_4}=\left[ \begin{array}{ccccccc}
        1        &1        &\ldots&1&1        &0 &0     \\
        \alpha_1      &\alpha_2      &\ldots&\alpha_{q-1} &0     &0 &1    \smallskip \\
        \alpha_1^2    &\alpha_2^2    &\ldots&\alpha_{q-1}^2&0    &1 &0    \\
            \end{array}\right] \left[\begin{array}{@{}ccccc@{}}
         v_1 & 0 & \ldots & 0 & 0 \\
         0 & v_2 & \ldots & 0 & 0 \\
        \ldots & \ldots & \ldots & \ldots & \ldots \\
         0 & 0 & \ldots& v_{q+1} & 0 \\
         0 & 0 & \ldots & 0 & v_{q+2}
                \end{array}\right],
\end{align}
where $\alpha_i\in\F_q^*$, $\alpha_i\ne \alpha_j$, if $i\ne j$, $v_i\in\F_q^*$, the elements $v_i$ do not have to be distinct.

The $[\frac{q^m-1}{q-1},\frac{q^m-1}{q-1}-m,3]_q1$ Hamming code is well known \cite{HufPless,MWS,Roth}. For $m=2$, it is the $[q+1,q-1,3]_q1$ GDRS code with the parity check matrix $H_3$ \eqref{eq2_HDRS}.

In $\PG(N,q)$, $2\le N\le q-2$, a \emph{normal rational curve} is any $(q+1)$-arc projectively equivalent to the arc
$\{(1,t,t^2,\ldots, t^N):t\in \F_q\}\cup \{(0,\ldots,0 ,1)\}$ where $(1,t,\ldots,t^N)$ and $(0,\ldots,0 ,1)$ are points in homogeneous coordinates.  The points  of a normal rational curve in $\PG(N,q)$
treated as columns define a parity check matrix of a $[q + 1,q-N,d=N + 2]_qR$ GDRS code \cite{EtzStorm2016,LandSt,Roth}.

If in $\PG(N,q)$, the normal rational curve is a \emph{complete $(q+1)$-arc}, then the corresponding $[q + 1,q-N,d=N + 2]_qR$ GDRS code $\C$ cannot be extended to a $[q + 2,q-N+1,d=N + 2]_q$ MDS code, i.e. $\C$ has covering radius $R=d-2$.

The following conjectures are well known.

\begin{conjecture} \label{conj2_NRC}Let $2\le N\le q-2$. In $\PG(N,q)$, every normal rational curve is a complete $(q+1)$-arc except for the cases when  $q$ is even and $N\in\{2,q-2\}$, in which one point can be added to the curve.
\end{conjecture}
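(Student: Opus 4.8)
The plan is to reduce the statement to the extendability of one explicit curve, to settle the two genuinely exceptional families by direct construction, and then to confront the main assertion — completeness for all remaining parameters — which, as the word ``conjecture'' signals, is the hard open core (in the coding language above it asks that the GDRS code attached to the curve have covering radius $d-2$, not $d-1$). Since, by the definition recalled above, all normal rational curves of $\PG(N,q)$ are projectively equivalent, it suffices to treat $\Gamma=\{(1,t,t^2,\ldots,t^N):t\in\F_q\}\cup\{(0,\ldots,0,1)\}$. A point $P\notin\Gamma$ extends $\Gamma$ to a $(q+2)$-arc if and only if $P$ lies on no hyperplane meeting $\Gamma$ in $N$ distinct points: $\Gamma$ is already an arc, and a hyperplane $\sum_{i=0}^{N}a_ix_i=0$ cuts $\Gamma$ in the distinct $\F_q$-zeros of the polynomial $\sum_{i=0}^{N}a_it^i$ of degree $\le N$, together with the point $(0,\ldots,0,1)$ when $a_N=0$ — hence in at most $N$ points, the bound being attained exactly by the ``$N$-secant'' hyperplanes. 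So the conjecture is equivalent to the assertion that the $N$-secant hyperplanes of $\Gamma$ cover all of $\PG(N,q)$ except for a single point, and that such an uncovered point exists precisely when $q$ is even and $N\in\{2,q-2\}$.

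I would next dispose of those two cases. For $q$ even and $N=2$, $\Gamma$ is a conic, all of whose tangent lines pass through one point, the nucleus; the nucleus therefore lies on no secant, $\Gamma\cup\{\text{nucleus}\}$ is a hyperoval, and a short incidence count shows that every other point off the conic lies on a secant, so the nucleus is the unique point one may add, while maximality of the hyperoval is immediate. The case $N=q-2$ follows by arc duality: passing to the dual MDS code, a $(q+2)$-arc of $\PG(N,q)$ containing an extended normal rational curve exists if and only if the same holds in $\PG(q-N,q)$, so the conic-plus-nucleus hyperoval of $\PG(2,q)$ corresponds to the claimed one-point extension of the normal rational curve of $\PG(q-2,q)$, maximality included.

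What remains — completeness of $\Gamma$ for every $(N,q)$ with $2\le N\le q-2$ outside those two families — is the real obstacle. For $N=2$ and $q$ odd it is classical and elementary: a direct count shows that every point off a conic lies on $(q-1)/2$ or $(q+1)/2$ secants, so no point can be added. For $q$ prime the whole statement is a consequence of the MDS conjecture, a theorem in that case (Ball): outside the two exceptional families there is then no $(q+2)$-arc at all, a fortiori none containing $\Gamma$. For a general prime power $q$ one is driven to analyse the osculating hyperplanes of $\Gamma$ — which in large characteristic form again a normal rational curve in the dual space — and to prove that, together with the genuine $N$-secant hyperplanes, they sweep out every point of $\PG(N,q)$; the natural machinery is Hasse--Weil and St\"ohr--Voloch type bounds for the number of $\F_q$-points on the auxiliary curves when $q$ is large relative to $N$, supplemented by the classification of short arcs for the finitely many small residual cases. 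The genuine difficulty, and the reason the conjecture is open in general, lies in small characteristic: once $p\le N$ the Hasse derivatives of $t\mapsto t^{N}$ degenerate, osculating hyperplanes gain extra contact or collapse onto one another, Frobenius-invariant configurations appear, and the point counts lose all slack — exactly the phenomena that also obstruct the full MDS conjecture. Accordingly I would expect a complete proof only under an extra hypothesis such as ``$q$ prime'' or ``$q$ large with respect to $N$'', the general case remaining, as in the literature, the stated conjecture.
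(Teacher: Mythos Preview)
The statement you were asked to prove is labelled \emph{Conjecture} in the paper, and indeed the paper offers no proof of it whatsoever: it merely records the conjecture, notes that it is implied (for a given pair $(N,q)$) by the MDS conjecture, and refers the reader to the literature for the pairs $(N,q)$ where it is known. So there is no ``paper's own proof'' to compare against.

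Your write-up is therefore appropriate in spirit: you correctly reduce to a single model curve via projective equivalence, give the right reformulation in terms of $N$-secant hyperplanes covering $\PG(N,q)$, dispose of the two exceptional families cleanly (the conic--nucleus case and its dual), and then honestly flag the remaining assertion as the open core, pointing to the MDS conjecture for $q$ prime and to Hasse--Weil/St\"ohr--Voloch machinery for $q$ large relative to $N$. This is exactly the state of the art as the paper's references indicate. The only thing to be said is that your text is not a proof but an informed survey of partial results and obstructions --- which is the best one can do for a genuinely open conjecture. If the assignment expected a proof, the correct answer is simply that none exists in the paper (or in the literature, in full generality).
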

\begin{conjecture} \label{conj2_MDS}\textbf{(MDS conjecture)} Let $2\le N\le q-2$. An  $[n,n-N-1,N+2]_q$ MDS code \emph{(}or, equivalently, an $n$-arc in $\PG(N,q)$\emph{)} has length $n\le q+1$ except for the cases when $q$ is even and $N\in\{2,q-2\}$, in which $n\le q+2$.
\end{conjecture}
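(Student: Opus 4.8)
The plan is to pass to the projective formulation and bound the size of an arc directly. An $n$-arc in $\PG(N,q)$ is a set of $n$ points spanning the space with no $N+1$ of them on a common hyperplane; equivalently, it is the column set of an $(N+1)\times n$ matrix every $N+1$ of whose columns are linearly independent. Since the dual of an $[n,n-N-1,N+2]_q$ MDS code is an $[n,N+1,n-N]_q$ MDS code, the conjecture is symmetric under $N\leftrightarrow q-1-N$, so I would first reduce to the lower half of the range of $N$ and then treat the planar case $N=2$ as the base of an induction. In particular the two exceptional values $N\in\{2,q-2\}$ for even $q$ are interchanged by this duality, so it suffices to understand the planar situation and to propagate it upward.

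For the base case $N=2$ I would follow Segre. When $q$ is odd, take a hypothetical $(q+2)$-arc; every point lies on exactly one tangent line, and Segre's \emph{lemma of tangents} gives, for each ordered triple of arc points, a multiplicative relation among the coordinates of the tangent lines at the remaining points whose product is forced to a fixed value. Combining these relations appropriately forces the $q+1$ tangents to be precisely the tangents of a conic, so the arc is a conic and admits no further point, yielding $n\le q+1$. When $q$ is even all tangents pass through a common nucleus, which is exactly the extra point that may be adjoined, giving the exceptional bound $n\le q+2$ for $N\in\{2,q-2\}$.

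To climb to arbitrary $N$ I would use the polynomial method in the form developed by Ball. Fix $N-1$ points of the arc and, to every further pair, associate the linear form cutting out the hyperplane spanned by that pair together with the fixed points. Packaging these forms into a homogeneous polynomial $F$ in two variables whose degree is governed by $n$, the higher-dimensional generalisation of the lemma of tangents produces a functional symmetry for $F$; combined with the vanishing of $F$ along the $q$ directions determined by the arc, this yields a divisibility relation that translates into a numerical inequality on $n$. If $n$ exceeded $q+1$ (respectively $q+2$ in the even exceptional cases), this inequality would fail, giving the contradiction.

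The hard part---indeed the reason this remains a conjecture---is that the polynomial identity closes only when certain binomial coefficients are nonzero in $\F_q$. For $q=p$ prime this holds by a Lucas-type argument, so the method settles the prime case, and with additional care it extends to small exponents $q=p^2$. For a general prime power these characteristic-$p$ coefficients may vanish and the chain of implications breaks; controlling this degeneracy for \emph{all} $q$ is the essential obstacle, and no current approach overcomes it beyond the planar Segre case and the small-exponent prime-power cases. For this reason I would present the statement as a conjecture, with the expectation of proving it unconditionally only in these regimes.
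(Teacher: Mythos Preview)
The statement you were asked about is explicitly labelled a \emph{Conjecture} in the paper, and the paper offers no proof whatsoever: it simply records the MDS conjecture, notes its logical relation to the preceding conjecture on the completeness of normal rational curves, and refers the reader to the literature for the pairs $(N,q)$ where it has been established. There is therefore no ``paper's own proof'' to compare your attempt against.

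You in fact recognise this yourself in your final paragraph, where you explain why the polynomial method closes only in the prime and small prime-power cases and conclude by presenting the statement as a conjecture. Your sketch of the Segre and Ball approaches is a reasonable outline of the known partial progress toward the conjecture, but none of that material appears in the paper, which treats the MDS conjecture purely as background in the preliminaries.
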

 If the MDS conjecture holds for some pair $(N,q)$ then Conjecture \ref{conj2_NRC} holds too, but in general, the reverse is not true.
For the pairs $(N,q)$ for which MDS conjecture is proved, see
 \cite{BallLav} and the references therein.
 For the pairs $(N,q)$ for which Conjecture \ref{conj2_NRC} is proved, see \cite{BDMP_AlmComplArc}  and the references therein including \cite{St_1992_ComplNRC}.

 Thus, the $[q+1, q+2-d,d]_qR$ GDRS codes with $R=d-2$ are an important case of MDS codes of covering radius $R=d-2$. For $n-d=3$ and $n<q+1$ there exist several examples of $[n,n-3,4]_q2$ codes described as complete arcs in the projective plane $\PG(2,q)$, see e.g. \cite{Hirs_PGFF,HirsStor-2001,HirsThas-2015} and the references therein. For $n-d>3$ and $n<q+1$, we also have  MDS
codes with $R=d-2$ but in this case the number of examples is smaller, see e.g. \cite{BartGiulPlat} and the references therein.

\section{Transformation of the Bonneau formula for the weight distribution of the cosets of MDS codes}\label{sec_transform}
We do technical transformations of the parts of the Bonneau formula \eqref{eq2_wd_cosetBon}--\eqref{eq2_Bw2Bon} and obtain its new version
\eqref{eq1_BonGenTransf}, \eqref{eq1_Omega} of Theorem~\ref{th1_BonTrans}.

Throughout the paper we put that a sum $\sum_{i=A}^B\ldots$ is equal to zero if $B<A$.

We use the following combinatorial identities \cite[Section 1, Equations (I), (III), (IV), Problem 9(a),(b)]{Riordan}:
\begin{align}
 &\binom{h}{\ell}=\binom{h}{h-\ell}=\binom{h-1}{\ell}+\binom{h-1}{\ell-1},\label{eq3_Riordan_ident0}\displaybreak[3]\\
 &\binom{h}{m}\binom{m}{p}=\binom{h}{p}\binom{h-p}{m-p}=\binom{h}{m-p}\binom{h-m+p}{p}, \label{eq3_Riordan_ident2}\displaybreak[3]\\ &\sum_{j=0}^m(-1)^j\binom{h}{j}=(-1)^m\binom{h-1}{m}=(-1)^m\binom{h-1}{h-1-m},\label{eq3_Riordan_ident3}\displaybreak[3]\\
 &\sum_{j=0}^{h-m}(-1)^j\binom{h}{m+j}=\binom{h-1}{m-1}.\label{eq3_Riordan_ident4}
\end{align}

\begin{lemma}\label{lem3_Bw1}
Let $\C$ be an $[n,k,d]_q$ MDS code. Let $\V$ be one of its cosets. Let $\BB_{w,1}(\V)$ be as in \eqref{eq2_Bw1Bon}. Then
\begin{align}\label{eq3_Bw1}
 \BB_{w,1}(\V)=A_w(\C)-\Omega_w^{(0)}(n,d),~w\ge d-1.
\end{align}
\end{lemma}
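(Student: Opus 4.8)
The plan is to start from the closed form \eqref{eq2_Bw1Bon} for $\BB_{w,1}(\V)$ and the known MDS weight formula \eqref{eq2_wd_MDS} for $A_w(\C)$, and to show that their difference collapses to the single binomial product $\Omega_w^{(0)}(\C)$ in \eqref{eq3_Omega3}. First I would note that for $w=d-1$ the inner sum in \eqref{eq2_wd_MDS} is empty, so $A_{d-1}(\C)=0$, while the sum in \eqref{eq2_Bw1Bon} has the single term $j=0$ giving $\BB_{d-1,1}(\V)=\binom{n}{d-1}$; since $\Omega_{d-1}^{(0)}(\C)=-\binom{n}{d-1}$ by \eqref{eq3_Omega3} (the factor $(-1)^{d-1-d}=-1$ and $\binom{d-2}{d-2}=1$), the identity \eqref{eq3_Bw1} holds in this base case. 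This also suggests the right way to organize the general computation: peel off the extra term.

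For $w\ge d$, I would write
\begin{align*}
\BB_{w,1}(\V)-A_w(\C)=\binom{n}{w}\left[\sum_{j=0}^{w-d+1}(-1)^j\binom{w}{j}q^{w-d+1-j}-\sum_{j=0}^{w-d}(-1)^j\binom{w}{j}(q^{w-d+1-j}-1)\right].
\end{align*}
The two geometric-type sums over $0\le j\le w-d$ cancel exactly, leaving
\begin{align*}
\BB_{w,1}(\V)-A_w(\C)=\binom{n}{w}\left[(-1)^{w-d+1}\binom{w}{w-d+1}+\sum_{j=0}^{w-d}(-1)^j\binom{w}{j}\right].
\end{align*}
Now the bracket is precisely $\sum_{j=0}^{w-d+1}(-1)^j\binom{w}{j}$, which by the partial-alternating-sum identity \eqref{eq3_Riordan_ident3} (with $h=w$, $m=w-d+1$) equals $(-1)^{w-d+1}\binom{w-1}{w-d+1}=(-1)^{w-d+1}\binom{w-1}{d-2}$, using \eqref{eq3_Riordan_ident0}. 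Hence $\BB_{w,1}(\V)-A_w(\C)=(-1)^{w-d+1}\binom{n}{w}\binom{w-1}{d-2}=-\Omega_w^{(0)}(\C)$ by \eqref{eq3_Omega3}, which rearranges to \eqref{eq3_Bw1}.

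The main obstacle, such as it is, is purely bookkeeping: making sure the summation ranges are handled correctly at the boundary $j=w-d$ versus $j=w-d+1$, and that the $q=1$-free cancellation of the geometric parts is legitimate term by term (it is, since the summands agree for each $j$ in the overlap). No convergence or sign subtlety arises. Once \eqref{eq3_Riordan_ident3} is invoked with the correct parameters, the result is immediate; the only care needed is to verify the base case $w=d-1$ separately, since \eqref{eq2_wd_MDS} is only asserted for $w\ge d$, and to observe that for $w\le d-2$ both sides of \eqref{eq3_Bw1} vanish trivially (consistent with \eqref{eq3_Omega2} and $A_w(\C)=0$), so the stated range $w\ge d-1$ is exactly the range where the identity has content.
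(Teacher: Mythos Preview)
Your proof is correct and follows essentially the same approach as the paper: subtract the MDS weight formula \eqref{eq2_wd_MDS} from \eqref{eq2_Bw1Bon} and reduce the residual via the partial alternating-sum identity \eqref{eq3_Riordan_ident3}. Your organization is in fact slightly cleaner than the paper's: by folding the leftover term $(-1)^{w-d+1}\binom{w}{w-d+1}$ back into the alternating sum and applying \eqref{eq3_Riordan_ident3} with $m=w-d+1$, you obtain $\binom{w-1}{d-2}$ directly, whereas the paper applies \eqref{eq3_Riordan_ident3} with $m=w-d$ to get $\binom{w-1}{d-1}$ and then needs the extra step \eqref{eq3_Riordan_ident1} to combine $\binom{w}{d-1}-\binom{w-1}{d-1}=\binom{w-1}{d-2}$.
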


\begin{proof}
We write $\BB_{w,1}(\V)$ of \eqref{eq2_Bw1Bon} as
\begin{align*}
&\binom{n}{w}\sum_{j=0}^{w-d}(-1)^j\binom{w}{j}\left(q^{w+1-d-j}-1+1\right)+
\binom{n}{w}\cdot(-1)^{w-d+1}\binom{w}{w-d+1}\displaybreak[3]\\
&=\binom{n}{w}\left(\sum_{j=0}^{w-d}(-1)^j\binom{w}{j}\left(q^{w+1-d-j}-1\right)+\sum_{j=0}^{w-d}(-1)^j\binom{w}{j}-(-1)^{w-d}\binom{w}{d-1}\right)
\end{align*}
where we applied \eqref{eq3_Riordan_ident0} to the last summand. Now we use \eqref{eq2_wd_MDS} and apply the 2-nd equality of \eqref{eq3_Riordan_ident3} to $\sum_{j=0}^{w-d}(-1)^j\binom{w}{j}$. As a result, we have
\begin{align*}
&\BB_{w,1}(\V)=A_w(\C)-(-1)^{w-d}\binom{n}{w}\left(\binom{w}{d-1}-\binom{w-1}{d-1}\right).
\end{align*}
After applying \eqref{eq3_Riordan_ident0} to $\binom{w}{d-1}-\binom{w-1}{d-1}$, the assertion follows from the notation \eqref{eq1_Omega}.
\end{proof}

\begin{lemma}\label{lem3_Bw2}
Let $\C$ be an $[n,k,d]_q$ MDS code. Let $\V$ be one of its cosets. Let $\BB_{w,2}(\V)$  be as in \eqref{eq2_Bw2Bon}.  Then
\begin{align}\label{eq3_Bw2}
 \BB_{w,2}(\V)=\sum_{v=0}^{d-2}\Omega_w^{(v)}(n,d)B_v(\V),~w\ge d-1.
\end{align}
\end{lemma}

\begin{proof} We write $\BB_{w,2}(\V)$ of \eqref{eq2_Bw2Bon} as
\begin{align}\label{eq3:Bw2a}
 \BB_{w,2}(\V)=\sum_{v=0}^{d-2}\widehat{\Omega}_w^{(v)}(n,d)B_v(\V)
\end{align}
where $\widehat{\Omega}_w^{(v)}(n,d)B_v(\V)$ is the contribution of the value $B_v(\V)$ to the final result~$B_w(\V)$.
Summing up all terms in \eqref{eq2_Bw2Bon} multiplied by $B_v(\V)$, we obtain
\begin{align}\label{eq3:Bw2b}
  \widehat{\Omega}_w^{(v)}(n,d)=\sum_{j=w-d+2}^{w-v}(-1)^j\binom{j+n-w}{j}\binom{n-v}{w-j-v},~v=0,1,\ldots,d-2.
\end{align}
By \eqref{eq3_Riordan_ident0} and the 2-nd equality of \eqref{eq3_Riordan_ident2},  we have
\begin{align*}
&\binom{j+n-w}{j}\binom{n-v}{w-j-v}=\binom{n-v}{j+n-w}\binom{j+n-w}{j}=\binom{n-v}{n-w}\binom{w-v}{j};\displaybreak[3]\\
&\widehat{\Omega}_w^{(v)}(n,d)=\binom{n-v}{n-w}\sum_{j=w-d+2}^{w-v}(-1)^j\binom{w-v}{j}.
\end{align*}
We change the variable $i=j-(w-d+2)$ and obtain
\begin{align*}
&\widehat{\Omega}_w^{(v)}(n,d)=(-1)^{w-d}\binom{n-v}{n-w}\sum_{i=0}^{d-2-v}(-1)^{i}\binom{w-v}{i+w-d+2}\\
&=(-1)^{w-d}\binom{n-v}{n-w}\binom{w-v-1}{d-2-v}=\Omega_w^{(v)}(n,d)
\end{align*}
where $\sum_{i=0}^{d-2-v}(-1)^{i}\binom{w-v}{i+w-d+2}=\binom{w-v-1}{w-d+1}$ due to  \eqref{eq3_Riordan_ident4}.
\end{proof}

The technical Lemmas \ref{lem3_Bw1} and \ref{lem3_Bw2} allow us to prove the main result of this section.\medskip

\noindent\emph{Proof of Theorem }\ref{th1_BonTrans}. Substituting \eqref{eq3_Bw1} and \eqref{eq3_Bw2} to  \eqref{eq2_wd_cosetBon} we obtain the new version \eqref{eq1_BonGenTransf}, \eqref{eq1_Omega} of the Bonneau formula and thereby prove Theorem \ref{th1_BonTrans}. \qed

\begin{remark}\label{rem3:comp}
Comparing the original Bonneau formula \eqref{eq2_wd_cosetBon}--\eqref{eq2_Bw2Bon} of \cite{Bonneau1990} with the new version~\eqref{eq1_BonGenTransf}, \eqref{eq1_Omega} of Theorem \ref{th1_BonTrans}, we see that the double sum of~\eqref{eq2_Bw2Bon} is transformed to a single one in \eqref{eq1_BonGenTransf} with regular combinatorial coefficients $\Omega_w^{(v)}(n,d)$ for $B_v(\V)$. This allows us to consider effectively cosets of distinct weights~$W$, see Section \ref{sec_spec} and Remark \ref{rem4:comp} in it. Also, in \eqref{eq1_BonGenTransf}, \eqref{eq1_Omega}, the contribution of every value $B_v(\V)$ to the final result $B_w(\V)$ is written directly and clearly, whereas, in \eqref{eq2_Bw2Bon}, its representation is relatively complicated.  This is important for theoretical investigations of the coset weight distributions.

The new version of the Bonneau formula has lesser complexity.  By \eqref{eq1_BonGenTransf}, \eqref{eq1_Omega} and \eqref{eq2_wd_cosetBon}, \eqref{eq3:Bw2a}, \eqref{eq3:Bw2b}, the contribution of the value $B_v(\V)$ to the final result~$B_w(\V)$ can be calculated in the new version (where it is $\Omega_w^{(v)}(n,d)B_v(\V)$) essentially simpler than in the original one (where it is $\widehat{\Omega}_w^{(v)}(n,d)B_v(\V)$). For confirmation, it is sufficient to compare \eqref{eq1_Omega} and \eqref{eq3:Bw2b}. Thus, the new version needs  smaller calculations than the original one.

In addition, the number $A_w(\C)$ of codewords of the given weight is now directly included in the relation for the corresponding value $B_w(\V)$ for a coset. This has a methodological  significance for research showing how the weight distributions of  a code and its cosets differ from each other. Also, distinct methods of calculation, estimate, and approximation of $A_w(\C)$ are considered in the literature, see e.g. \cite{CheungIEEE1990}, \cite[Section 1.4.8]{Klove}, and included to the systems of symbol calculations. The approximation can be useful for the estimate probabilities of decoding results, see e.g. \cite{CheungIEEE1990,CheungIEEE1992}.
\end{remark}

\section{Specific cases of the weight distribution of the cosets of MDS codes}\label{sec_spec}
We consider specific applications of the new version  \eqref{eq1_BonGenTransf} of the Bonneau formula. Using particularities of the cases considered, we simplify the formula \eqref{eq1_BonGenTransf} for them.

We note some particular cases of  \eqref{eq1_Omega}:
\begin{align}
&\Omega_{d-1}^{(v)}(n,d)=-\binom{n-j}{d-1-j},~0\le v\le d-2;\displaybreak[3]\label{eq4_Omega4}\\
&\Omega_w^{(d-2)}(n,d)=(-1)^{w-d}\binom{n-d+2}{w-d+2}=(-1)^{w-d}\binom{n-d+2}{n-w}\text{ if }w\ge d-1.\label{eq4_Omega5}
\end{align}

\subsection{The cosets of weights $1$ and $d-1$ for MDS codes of distance $d\ge2$}

\begin{theorem}\label{th4_wd1cosetMDS}
 Let  $\C$ be an $[n,k,d]_q$ MDS code with $d\ge3$. Then all its $n(q-1)$ cosets $\V^{(1)}$ of weight $1$ have the same weight distribution $B_w(\V^{(1)})$ of the form:
  \begin{align}\label{eq4_wd1wc}
&B_w(\V^{(1)})=0 \text{ if }  w\in\{0,1,\ldots,d-2\}\setminus\{1\};~B_1(\V^{(1)})=1,\displaybreak[3]\\
&B_{d-1}(\V^{(1)})=\binom{n-1}{d-1};\,B_w(\V^{(1)})=A_{w}(\C)-\Omega_w^{(0)}(n,d)+\Omega_w^{(1)}(n,d),\, w=d,\ldots,n.\notag
\end{align}
\end{theorem}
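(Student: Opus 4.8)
The plan is to apply Theorem \ref{th1_BonTrans} directly, after first pinning down the low-weight part of the distribution of a weight $1$ coset. Since $d\ge 3$, we have $t(\C)=\lfloor(d-1)/2\rfloor\ge 1$, so by \eqref{eq2:NWSigma} the code has exactly $\binom{n}{1}(q-1)=n(q-1)$ cosets of weight $1$, and each such coset has a unique leader; hence $B_1(\V^{(1)})=1$ while $B_v(\V^{(1)})=0$ for every $v\in\{0,1,\ldots,d-2\}\setminus\{1\}$. This establishes the first line of \eqref{eq4_wd1wc}.

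Next I would feed these values into \eqref{eq1_BonGenTransf}. Because the only nonzero term among $B_0(\V^{(1)}),\ldots,B_{d-2}(\V^{(1)})$ is $B_1(\V^{(1)})=1$, the sum $\sum_{v=0}^{d-2}\Omega_w^{(v)}(\C)B_v(\V^{(1)})$ collapses to $\Omega_w^{(1)}(\C)$, giving
\[
B_w(\V^{(1)})=A_w(\C)-\Omega_w^{(0)}(\C)+\Omega_w^{(1)}(\C),\qquad w=d-1,d,\ldots,n.
\]
For $w\ge d$ this is exactly the third line of \eqref{eq4_wd1wc}, so nothing more is needed there.

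It remains to put the $w=d-1$ case into closed form. Here $A_{d-1}(\C)=0$ since $\C$ is MDS of minimum distance $d$, and by \eqref{eq3_Omega4} we have $\Omega_{d-1}^{(0)}(\C)=-\binom{n}{d-1}$ and $\Omega_{d-1}^{(1)}(\C)=-\binom{n-1}{d-2}$. Substituting and using Pascal's rule \eqref{eq3_Riordan_ident1} in the form $\binom{n}{d-1}=\binom{n-1}{d-1}+\binom{n-1}{d-2}$ yields $B_{d-1}(\V^{(1)})=\binom{n}{d-1}-\binom{n-1}{d-2}=\binom{n-1}{d-1}$, the second line of \eqref{eq4_wd1wc}. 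Finally, the assertion that all weight $1$ cosets share this distribution is immediate once the formula is derived: every quantity on the right-hand side --- $A_w(\C)$ via \eqref{eq2_wd_MDS}, and $\Omega_w^{(0)}(\C)$, $\Omega_w^{(1)}(\C)$ via \eqref{eq3_Omega} --- depends only on $n$, $d$, $q$ and not on the choice of coset.

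I do not expect a genuine obstacle: the proof is essentially a bookkeeping specialization of Theorem \ref{th1_BonTrans}. The only two points that call for a moment's care are the use of $t(\C)\ge 1$ (which is where the hypothesis $d\ge 3$ enters) to guarantee both that there are exactly $n(q-1)$ such cosets and that the weight $1$ leader is unique, and the small binomial simplification via \eqref{eq3_Riordan_ident1} that produces the clean value $\binom{n-1}{d-1}$ for $B_{d-1}(\V^{(1)})$.
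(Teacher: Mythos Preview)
Your proposal is correct and follows essentially the same route as the paper: specialize \eqref{eq1_BonGenTransf} using $B_v(\V^{(1)})=0$ for $v\neq 1$ and $B_1(\V^{(1)})=1$, then simplify the $w=d-1$ case via \eqref{eq3_Omega4} and Pascal's rule \eqref{eq3_Riordan_ident1}. The only cosmetic difference is that you spell out explicitly why $d\ge 3$ guarantees a unique leader and why the resulting formula is coset-independent, which the paper leaves implicit.
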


\begin{proof}
The total number $\Nb^{(1)}_\Sigma(\C)=n(q-1)$ of  weight $1$ cosets follows from \eqref{eq2:NWSigma}.  By the hypothesis, $1\in\{0,1,\ldots,d-2\}$. Then the 1-st equality of~\eqref{eq4_wd1wc} follows from the definition of weight of a coset. Also, $B_1(\V^{(1)})=1$ as the coset $\V^{(1)}$ has a unique leader. Thus, in \eqref{eq1_BonGenTransf} we have
 \begin{align*}
   \sum_{v=0}^{d-2}\Omega_w^{(v)}(n,d)B_v(\V^{(1)})=\Omega_{w}^{(1)}(n,d)B_1(\V^{(1)})=\Omega_{w}^{(1)}(n,d)
 \end{align*}
 that provides the last relation of \eqref{eq4_wd1wc} for $ w=d-1,d,d+1,\ldots,n$. Then, for $w=d-1$, we take $A_{w}(\C)=0$, use  \eqref{eq4_Omega4}, and apply \eqref{eq3_Riordan_ident0}. As a result,
 \begin{align*}
&B_{d-1}(\V^{(1)})=-\Omega_{d-1}^{(0)}(n,d)+\Omega_{d-1}^{(1)}(n,d)=\binom{n}{d-1}-\binom{n-1}{d-2}= \binom{n-1}{d-1}. \qedhere
 \end{align*}
\end{proof}

Theorem \ref{th4_wd1cosetMDS} is used in Corollary \ref{cor4:W1d-1} and Theorem \ref{th6:general} which is basic  in Section~\ref{sec_d=3,4}.

For an $[n,k,d]_qR$ MDS code, the weight $d-1$ cosets exist if $R=d-1$.
\begin{theorem}\label{th4_W=d-1}
Let $\C$ be an $[n,k,d]_qR$ MDS code of distance $d\ge2$ and covering radius $R=d-1$.  Then all its cosets $\V^{(d-1)}$ of weight $d-1$ have the weight distribution $B_w(\V^{(d-1)})$ of the form:
\begin{align}\label{eq4_Bon_w=d-1}
&B_w(\V^{(d-1)})=0 \text{ if } w=0,1,\ldots,d-2;~B_{d-1}(\V^{(d-1)})=\binom{n}{d-1};\displaybreak[3]\\
&B_w(\V^{(d-1)})=A_w(\C)-\Omega_w^{(0)}(n,d)\text{ if } w=d,d+1,\ldots,n.\notag
\end{align}
\end{theorem}

\begin{proof}
 In \eqref{eq1_BonGenTransf},
   $B_v(\V^{(d-1)})=0 $ if $0\le v\le d-2$ that provides the last relation of~\eqref{eq4_Bon_w=d-1} for $ w=d-1,d,\ldots,n$. If $w=d-1$, we use \eqref{eq4_Omega4} and take $A_{d-1}(\C)=0$.
\end{proof}

Theorem \ref{th4_W=d-1} is used in Corollary \ref{cor4:W1d-1}  and Theorems~\ref{th6:general} and \ref{th7_R=d-1} which are the basic ones in Sections \ref{sec_d=3,4} and \ref{sec_multip}, respectively.
The results of Theorem \ref{th4_W=d-1} matter because the weight $d-1$ cosets consist of the farthest-off vectors (deep holes) of an MDS code. The deep holes are important for investigations of decoding \cite{HongWu2016,JusHoh2001,KaipaIEEE2017,Roth,XuXu2019,ZDK_DHRIEEE2019} and multiple coverings \cite{BDGMP_MultCov,BDGMP_MultCovFurth,BDMP_TwistCub,CHLL_CovCodBook}, see Introduction and Section \ref{sec_multip} for details.

Corollary \ref{cor4:W1d-1} accumulates the results of Theorems \ref{th4_wd1cosetMDS}, \ref{th4_W=d-1}.

\begin{corollary}\label{cor4:W1d-1}
\begin{description}
  \item[(i)]For an $[n,k,d]_q$ MDS code  of distance $d\ge2$, all cosets of weight~$1$  have the same weight distribution of the form \eqref{eq4_Bon_w=d-1}, if $d=2$, and~\eqref{eq4_wd1wc}, if $d\ge3$. Also, the total number of the weight $1$ cosets is equal to $q-1$, if $d=2$, and to $n(q-1)$, if $d\ge3$.

  \item[(ii)] For an $[n,k,d]_q$ MDS code  of distance $d\ge2$  and covering radius $R=d-1$, all cosets of weight~$d-1$  have the same weight distribution of the form \eqref{eq4_Bon_w=d-1}.
\end{description}
\end{corollary}

\begin{proof}
  The assertions follow from Theorems \ref{th4_wd1cosetMDS} and \ref{th4_W=d-1}. Note also that for an $[n,n-1,2]_q1$ MDS code, in total, there are $q$ cosets one of which is the code itself while the remaining $q-1$ ones have weight 1.
\end{proof}

Note that by the context of the known results on MDS codes, the equality of the weight distributions of the MDS code cosets of weight 1 and $d-1$ is expected. But the \emph{formulas of the weight distributions} for these cases are new and useful, see notes after Theorems \ref{th4_wd1cosetMDS} and \ref{th4_W=d-1} on their using.

\subsection{The cosets $\V^{(W)}$ of weight $2\le W\le \left\lfloor(d-1)/2\right\rfloor$ for MDS codes of distance $d\ge5$}

\begin{theorem}\label{th4_W<=t}
Let $\C$ be an $[n,k,d]_q$ MDS code of distance $d\ge5$. Let $\V^{(W)}$ be one of its cosets of weight $W$ with $2\le W\le\left\lfloor(d-1)/2\right\rfloor$.  Assume that all the values of $B_v(\V^{(W)})$ with $d-W\le v\le d-2$ are given. Then the weight distribution $B_w(\V^{(W)})$ of\/ $\V^{(W)}$ is as follows:
\begin{align}\label{eq4_Bon_w<=t}
&B_w(\V^{(W)})=0 \text{ if } w\in\{0,1,\ldots,d-W-1\}\setminus\{W\},~B_W(\V^{(W)})=1,\displaybreak[3]\\
&B_w(\V^{(W)})=A_w(\C)-\Omega_w^{(0)}(n,d)+\Omega_w^{(W)}(n,d)+
\sum_{v=d-W}^{d-2}\Omega_w^{(v)}(n,d)B_v(\V^{(W)})\notag\\
&\text{if }w=d-1,d,\ldots,n.\notag
\end{align}
\end{theorem}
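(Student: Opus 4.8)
The plan is to derive \eqref{eq4_Bon_w<=t} from the transformed Bonneau formula of Theorem~\ref{th1_BonTrans}, once one knows which vectors of weight at most $d-2$ a coset of weight $W\le\left\lfloor(d-1)/2\right\rfloor$ can contain. First I would pin down the forced values in the first two lines of \eqref{eq4_Bon_w<=t}. By definition of the coset weight, $B_w(\V^{(W)})=0$ for $w<W$. Since $2\le W\le\left\lfloor(d-1)/2\right\rfloor$ forces $2W\le d-1<d$, any two weight-$W$ vectors of $\V^{(W)}$ differ by a codeword of weight at most $2W<d$, hence by the zero codeword; so the leader is unique and $B_W(\V^{(W)})=1$ (cf.\ the remark after Notation~\ref{not1}). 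More generally, if $\vb_w\in\V^{(W)}$ has weight $w$ with $W\le w\le d-W-1$, then $\vb_w$ minus the leader is a codeword of weight at most $w+W\le d-1<d$, hence the zero codeword, so $w=W$; thus $B_w(\V^{(W)})=0$ for every $w\in\{0,1,\ldots,d-W-1\}\setminus\{W\}$. The values $B_v(\V^{(W)})$ for $d-W\le v\le d-2$ (a non-empty range, since $W\ge2$, and one disjoint from $\{0,\ldots,d-W-1\}$) are precisely the data assumed known, which is the second line of \eqref{eq4_Bon_w<=t}.

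With all $B_v(\V^{(W)})$, $0\le v\le d-2$, now determined (they vanish except for $B_W(\V^{(W)})=1$ and the given values at $d-W\le v\le d-2$), Theorem~\ref{th1_BonTrans} applies and its formula \eqref{eq1_BonGenTransf} reads, for $w=d-1,\ldots,n$,
\[
B_w(\V^{(W)})=A_w(\C)-\Omega_w^{(0)}(\C)+\sum_{v=0}^{d-2}\Omega_w^{(v)}(\C)B_v(\V^{(W)}).
\]
In this sum the $v=0$ term drops out because $B_0(\V^{(W)})=0$; the terms with $1\le v\le d-W-1$ and $v\ne W$ drop out by the previous step; the term $v=W$ equals $\Omega_w^{(W)}(\C)$ because $B_W(\V^{(W)})=1$; and what remains is $\sum_{v=d-W}^{d-2}\Omega_w^{(v)}(\C)B_v(\V^{(W)})$. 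Collecting these terms yields the last line of \eqref{eq4_Bon_w<=t} and completes the proof.

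I do not expect a genuine obstacle; the one point to watch is the index bookkeeping in the second step, namely that the hypotheses $d\ge5$ and $2\le W\le\left\lfloor(d-1)/2\right\rfloor$ guarantee $W\le d-W-1$ (equivalently $2W\le d-1$) and $d-W\le d-2$, so that $\{0\}$, $\{1,\ldots,d-W-1\}$, $\{d-W,\ldots,d-2\}$ partition $\{0,\ldots,d-2\}$ with $v=W$ lying in the middle block and hence contributing a clean $\Omega_w^{(W)}(\C)$. As a sanity check one may also note that at $w=d-1$, where $A_{d-1}(\C)=0$ and $\Omega_{d-1}^{(j)}(\C)=-\binom{n-j}{d-1-j}$ by \eqref{eq3_Omega4}, the formula collapses to an explicit expression in $n$, $d$, $W$ and the given $B_v(\V^{(W)})$.
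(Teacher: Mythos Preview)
Your proof is correct and follows essentially the same approach as the paper: both determine the forced values $B_v(\V^{(W)})$ for $0\le v\le d-2$ and then specialize the transformed Bonneau formula \eqref{eq1_BonGenTransf}. The only difference is detail: where the paper simply calls the vanishing of $B_w(\V^{(W)})$ for $w\in\{0,\ldots,d-W-1\}\setminus\{W\}$ ``obvious'' and appeals to the unique leader for $B_W(\V^{(W)})=1$, you supply the explicit triangle-inequality argument with the leader, and you spell out the index bookkeeping that places $W$ in $\{1,\ldots,d-W-1\}$ and makes $\{0\},\{1,\ldots,d-W-1\},\{d-W,\ldots,d-2\}$ a partition.
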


\begin{proof}
By the hypothesis, $W\in\{2,3,\ldots,d-W-1\}$. Then the 1-st equality of~\eqref{eq4_Bon_w<=t} follows from the definition of weight of a coset.   Also, $B_W(\V^{(W)})=1$ as the coset $\V^{(W)}$ has a unique leader.  Now the last equality follows from \eqref{eq1_BonGenTransf}.
\end{proof}

Theorem \ref{th4_W<=t} is used in Theorem \ref{th5:W=2} which is the basic one in Section \ref{sec_wd2}.

The weight distribution of the weight $W\le \lfloor(d-1)/2\rfloor$ cosets is useful for  estimates of the bounded distance decoder, including one up to half of minimum distance, see Section~\ref{subsec_cosets}.

\subsection{The cosets $\V^{(d-2)}$ of weight $d-2$ for MDS codes of distance $d\ge4$}
\begin{theorem}\label{th4_W=d-2}
Let $\C$ be an $[n,k,d]_q$ MDS code of distance $d\ge4$. Let $\V^{(d-2)}$ be one of its cosets of weight $d-2$.  Assume that the value $B_{d-2}(\V^{(d-2)})$ is given. Then the weight distribution $B_w(\V^{(d-2)})$ of\/ $\V^{(d-2)}$ is as follows:
\begin{align}\label{eq4_Bon_w=d-2}
&B_w(\V^{(d-2)})=0 \text{ if } w=0,1,\ldots,d-3;\displaybreak[3]\\
&B_w(\V^{(d-2)})=A_w(\C)-\Omega_w^{(0)}(n,d)+(-1)^{w-d}\binom{n-d+2}{n-w}B_{d-2}(\V^{(d-2)})\displaybreak[3]\notag\\
&\text{if }w=d-1,d,\ldots,n.\notag
\end{align}
\end{theorem}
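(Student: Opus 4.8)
The plan is to obtain Theorem~\ref{th4_W=d-2} as a direct specialization of the general transformed Bonneau formula \eqref{eq1_BonGenTransf}, \eqref{eq1_Omega} to a coset of weight exactly $d-2$, in complete analogy with the proofs of Theorems~\ref{th4_wd1cosetMDS}, \ref{th4_W<=t}, and \ref{th4_t<w<d-3}. The only inputs needed are the elementary structural facts about a weight $d-2$ coset together with the particular case \eqref{eq3_Omega5} of the coefficients $\Omega_w^{(v)}(\C)$.

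First I would record the trivial part: since $\V^{(d-2)}$ has weight $d-2$, by definition $B_w(\V^{(d-2)})=0$ for all $w<d-2$, which gives $B_w(\V^{(d-2)})=0$ for $w=0,1,\ldots,d-3$; and $B_{d-2}(\V^{(d-2)})$ is the number of coset leaders, which is assumed known. This establishes the first line of \eqref{eq4_Bon_w=d-2}. Next, I would specialize \eqref{eq1_BonGenTransf} to this coset. In the sum $\sum_{v=0}^{d-2}\Omega_w^{(v)}(\C)B_v(\V^{(d-2)})$, every term with $v<d-2$ vanishes because $B_v(\V^{(d-2)})=0$ there, so only the $v=d-2$ term survives, giving $B_w(\V^{(d-2)})=A_w(\C)-\Omega_w^{(0)}(\C)+\Omega_w^{(d-2)}(\C)B_{d-2}(\V^{(d-2)})$ for $w=d-1,d,\ldots,n$. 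This is the first equality of the second line of \eqref{eq4_Bon_w=d-2}.

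Finally I would simplify $\Omega_w^{(d-2)}(\C)$ using \eqref{eq3_Omega5}: for $w\ge d-1$ one has $\Omega_w^{(d-2)}(\C)=(-1)^{w-d}\binom{n-d+2}{w-d+2}=(-1)^{w-d}\binom{n-d+2}{n-w}$, where the last equality is the symmetry identity \eqref{eq3_Riordan_ident0}. Substituting this into the previous display yields the second equality in \eqref{eq4_Bon_w=d-2}, namely $B_w(\V^{(d-2)})=A_w(\C)-\Omega_w^{(0)}(\C)+(-1)^{w-d}\binom{n-d+2}{n-w}B_{d-2}(\V^{(d-2)})$, completing the proof.

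There is essentially no obstacle here: the hypothesis $d\ge4$ merely guarantees that the weight $d-2\ge 2$ is a genuine nonzero-weight regime distinct from the $W=1$ and $W=d-1$ cases treated separately, and that the index range $v=d-2$ is the unique surviving index in the sum (for $d=3$ one would have $d-2=1$, overlapping the weight~1 analysis). One minor point worth a sentence is that a weight $d-2$ coset need not have a unique leader, so unlike Theorem~\ref{th4_wd1cosetMDS} we do not claim $B_{d-2}(\V^{(d-2)})=1$; this is exactly why the weight distribution genuinely depends on the parameter $B_{d-2}(\V^{(d-2)})$ and the cosets of weight $d-2$ may have non-identical weight distributions, as announced in the introduction. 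No further case analysis or computation is required.
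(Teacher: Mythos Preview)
Your proposal is correct and follows essentially the same approach as the paper's own proof: the paper likewise notes that the first relations of \eqref{eq4_Bon_w=d-2} follow from the hypothesis, that in \eqref{eq1_BonGenTransf} the only non-zero term is $\Omega_w^{(d-2)}(\C)B_{d-2}(\V^{(d-2)})$, and then invokes \eqref{eq3_Omega5}. Your added remarks about the role of $d\ge4$ and the possible non-uniqueness of the coset leader are accurate but not needed for the argument itself.
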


\begin{proof}
The 1-st equality of~\eqref{eq4_Bon_w=d-2} follows from the definition of weight of a coset.  Also, for $\V^{(d-2)}$, the only non-zero term of the sum $\sum_{v=0}^{d-2}\ldots$ in \eqref{eq1_BonGenTransf} is\\
   $\Omega_w^{(d-2)}(n,d)B_{d-2}(\V^{(d-2)})$. Then we use \eqref{eq4_Omega5}.
\end{proof}

Theorem \ref{th4_W=d-2} is used in Theorems \ref{th4_symmetry_d-2} and \ref{th6:general}. In turn, Theorem \ref{th4_symmetry_d-2} proves the symmetry of different weight distributions while Theorem~\ref{th6:general} is the basic one in Section \ref{sec_d=3,4}.
Then the results of Section \ref{sec_d=3,4} are used in Section \ref{sec_multip}.

For codes of covering radius $R=d-2$, the weight $d-2$ cosets consist of the farthest-off vectors (deep holes), see Introduction and Section \ref{sec_multip} for this topic.  GDRS and  GTRS codes and codes of \cite{BartGiulPlat} are examples of MDS codes with $R=d-2$.

\begin{theorem}\label{th4_symmetry_d-2}\textbf{(symmetry of different weight distributions)}
    Let $\C$ be an $[n,k,d]_q$ MDS code with $d\ge4$. Let $\V^{(d-2)}_a$ and $\V^{(d-2)}_b$ be two of its weight $d-2$ cosets with  different weight distributions. Then independently of the values of $B_{d-2}(\V^{(d-2)}_a)$ and $B_{d-2}(\V^{(d-2)}_b)$, there is   the following symmetry of the weight distributions:
\begin{align}\label{eq4_symmetry_d-2}
&(-1)^{n+d} B_w(\V_a^{(d-2)})-B_{n+d-2-w}(V_a^{(d-2)})\displaybreak[3]\\
&=(-1)^{n+d}B_w(V_b^{(d-2)})-B_{n+d-2-w}(V_b^{(d-2)}),~w=d-1,d,\ldots,n.\notag
\end{align}
\end{theorem}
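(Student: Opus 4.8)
The plan is to subtract the explicit formula of Theorem \ref{th4_W=d-2} applied to the two cosets. Since the terms $A_w(\C)$ and $\Omega_w^{(0)}(\C)$ depend only on the code $\C$ and not on the coset, they cancel in the difference, and using the second form of the correction term in \eqref{eq4_Bon_w=d-2} one gets, for every $w$ with $d-1\le w\le n$,
\begin{align*}
B_w(\V_a^{(d-2)})-B_w(\V_b^{(d-2)})=(-1)^{w-d}\binom{n-d+2}{n-w}\,\Delta,\qquad \Delta:=B_{d-2}(\V_a^{(d-2)})-B_{d-2}(\V_b^{(d-2)}).
\end{align*}
First I would also record that this identity holds for $w=d-2$ as well: there $A_{d-2}(\C)=0$ since $\C$ is MDS, $\Omega_{d-2}^{(0)}(\C)=0$ by \eqref{eq3_Omega2}, and $(-1)^{(d-2)-d}\binom{n-d+2}{n-(d-2)}=\binom{n-d+2}{n-d+2}=1$, so the right-hand side reduces to $\Delta$, which is exactly the left-hand side.

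Next I would substitute into this difference identity the ``mirror'' weight $w':=n+d-2-w$ alongside $w$ itself. For $w\in\{d-1,\dots,n\}$ one has $w'\in\{d-2,\dots,n-1\}$, so both $w$ and $w'$ lie in the range where the displayed difference identity is valid, the boundary case $w'=d-2$ (that is $w=n$) being covered by the trivial extension noted above. Since $n-w'=w-d+2=(n-d+2)-(n-w)$, identity \eqref{eq3_Riordan_ident0} yields $\binom{n-d+2}{n-w'}=\binom{n-d+2}{n-w}$; and since $w'-d=n-w-2$, one has $(-1)^{w'-d}=(-1)^{n-w}=(-1)^{n+w}$. Hence
\begin{align*}
B_{w'}(\V_a^{(d-2)})-B_{w'}(\V_b^{(d-2)})=(-1)^{n+w}\binom{n-d+2}{n-w}\,\Delta.
\end{align*}

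Finally I would compare this with $(-1)^{n+d}$ times the difference identity for $w$ itself: since $(-1)^{n+d}(-1)^{w-d}=(-1)^{n+w}$, that quantity is also $(-1)^{n+w}\binom{n-d+2}{n-w}\,\Delta$. Equating the two and rearranging gives precisely \eqref{eq4_symmetry_d-2}, and the expression is manifestly independent of the individual values $B_{d-2}(\V_a^{(d-2)})$, $B_{d-2}(\V_b^{(d-2)})$ (only their difference enters, and in the end it drops out of the asserted equality). I do not expect a genuine obstacle here; the only point needing care is the boundary weight $w=n$ (equivalently $w'=d-2$), for which Theorem \ref{th4_W=d-2} gives no formula for $B_{d-2}$ — this is exactly the reason the trivial extension of the difference identity to $w=d-2$ is singled out in the first step — together with routine sign and binomial-symmetry bookkeeping.
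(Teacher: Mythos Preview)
Your argument is correct and follows essentially the same route as the paper: both proofs subtract the formula of Theorem~\ref{th4_W=d-2} for the two cosets to obtain $B_w(\V_a^{(d-2)})-B_w(\V_b^{(d-2)})=(-1)^{w-d}\binom{n-d+2}{n-w}\Delta$, then evaluate this at the mirror weight $n+d-2-w$ and use the binomial symmetry \eqref{eq3_Riordan_ident0} together with the sign identity $(-1)^{n+d}(-1)^{w-d}=(-1)^{n+w}$ to conclude. Your treatment is in fact slightly more careful than the paper's, since you explicitly check that the difference identity extends to the boundary value $w'=d-2$ (needed when $w=n$), a point the paper's proof passes over silently.
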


\begin{proof}
By \eqref{eq4_Bon_w=d-2} and \eqref{eq3_Riordan_ident0}, we have
\begin{align*}
&  B_w(\V^{(d-2)}_a)-B_w(\V^{(d-2)}_b)\displaybreak[3]\\
&=(-1)^{w-d}\binom{n-d+2}{n-w}\left[B_{d-2}(\V^{(d-2)}_a)-B_{d-2}(\V^{(d-2)}_b)\right];\displaybreak[3]\\
&B_{n+d-2-w}(\V^{(d-2)}_a)-B_{n+d-2-w}(\V^{(d-2)}_b)\displaybreak[3]\notag\\
&=(-1)^{n-2-w}\binom{n-d+2}{-d+2+w}\left[B_{d-2}(\V^{(d-2)}_a)-B_{d-2}(\V^{(d-2)}_b)\right]\displaybreak[3]\\
&=(-1)^{n+d}(-1)^{w-d}\binom{n-d+2}{n-w}\left[B_{d-2}(\V^{(d-2)}_a)-B_{d-2}(\V^{(d-2)}_b)\right]. \qedhere
\end{align*}
\end{proof}

The symmetry of the different weight distributions of the weight $d-2$ cosets (as well as the weight $2$ cosets in Section \ref{sec_wd2}) is a new interesting fact.

\begin{remark}\label{rem4:comp}
Considering the theorems and its proofs in this section, we see that the new version \eqref{eq1_BonGenTransf}, \eqref{eq1_Omega} of the Bonneau formula allows us to easily find simplifications for specific cases. The proof of the symmetry in Theorem \ref{th4_symmetry_d-2} is also relatively simple.
\end{remark}

The results of this section give the convenient tools for obtaining the weight distributions of the cosets of distinct MDS codes.

\section{The weight distribution of the weight 2 cosets of MDS codes of distance $d\ge5$}\label{sec_wd2}
In this section, using the results of Section \ref{sec_spec}, we consider the weight distribution of the weight 2 cosets of MDS codes in more detail.

\begin{theorem}\label{th5:W=2}
Let $\C$ be an $[n,k,d]_q$ MDS code of distance $d\ge5$. Let $\V^{(2)}$ be one of its cosets of weight $2$.  Assume that  the value of $B_{d-2}(\V^{(2)})$ is known. Then the number $B_w(\V^{(2)})$ of weight $w$ vectors in the coset $\V^{(2)}$  has the  form:
\begin{align}\label{eq5:Bon_w=2}
&B_w(\V^{(2)})=0 \text{ if } w\in\{0,1,\ldots,d-3\}\setminus\{2\},~B_2(\V^{(2)})=1,\displaybreak[3]\\
&B_w(\V^{(2)})=A_w(\C)-\Omega_w^{(0)}(n,d)+\Omega_w^{(2)}(n,d)+(-1)^{w-d}\binom{n-d+2}{n-w}B_{d-2}(\V^{(2)})\displaybreak[3]\notag\\
&\text{ if }w= d-1,d,\ldots,n.\notag
\end{align}
\end{theorem}

\begin{proof}
We use \eqref{eq4_Bon_w<=t} with $W=2$ and apply \eqref{eq4_Omega5}.
\end{proof}

The relation \eqref{eq5:Bon_w=2}  is the base of further researches in this section.
\begin{lemma}\label{lem5:Bd-2Sum}
Let $\C$ be an $[n,k,d]_q$ MDS code with $d\ge5$. Then the overall number $\B_{d-2}^{\Sigma}(\V^{(2)})$ of weight $d-2$ vectors in all the weight $2$ cosets of $\C$ is as follows:
\begin{align}\label{eq5:Bd-2Sum}
   \B_{d-2}^{\Sigma}(\V^{(2)})=(q-1)\binom{n}{2}\binom{n-2}{d-2}.
   \end{align}
\end{lemma}

\begin{proof}
 As $d\ge5$, each weight 2 coset $\V^{(2)}$ has a unique leader. For each weight $d$ codeword $\cb_d$ of $\C$, there are $\binom{d}{2}$ coset leaders $\vb_2$ such that $\cb_d+\vb_2=\vb_{d-2}$. Therefore, using \eqref{eq2_wd_MDS} and  the 1-st equality of \eqref{eq3_Riordan_ident2}, we have
  \begin{align*}
&  \B_{d-2}^{\Sigma}(\V^{(2)})=A_d(\C)\binom{d}{2}=(q-1)\binom{n}{d}\binom{d}{2}=(q-1)\binom{n}{2}\binom{n-2}{d-2}. \qedhere
  \end{align*}
\end{proof}
Lemma \ref{lem5:Bd-2Sum} allows us to obtain the weight distribution for the case when all the weight 2 cosets have the same distribution.
\begin{theorem}\label{th5:wd2_ident}
Let  $\C$ be an $[n,k,d]_q$ MDS code of distance $d\ge5$.    Assume that  all the $\binom{n}{2}(q-1)^2$ weight $2$ cosets $\V^{(2)}$ of $\C$ have the same weight distribution.
Then, the weight distribution  of any weight $2$ coset $\V^{(2)}$ is as follows:
  \begin{align}\label{e6_wd2_ident}
&B_w(\V^{(2)})=0 \text{ if } w\in\{0,1,\ldots,d-3\}\setminus\{2\},~B_2(\V^{(2)})=1,\displaybreak[3]\\
&B_{d-2}(\V^{(2)})=\frac{\B_{d-2}^{\Sigma}(\V^{(2)})}{\binom{n}{2}(q-1)^2}=\frac{1}{q-1}\binom{n-2}{d-2},\displaybreak[3]\label{eq5:wd2identBd-2}\\
&B_w(\V^{(2)})
=A_w(\C)-\Omega_w^{(0)}(n,d)+\Omega_w^{(2)}(n,d)
+(-1)^{w-d}\,\frac{1}{q-1}\binom{n-d+2}{n-w}\binom{n-2}{d-2}\displaybreak[3]\notag\\
&\text{if }w=d-1,d,\ldots,n.\notag
\end{align}
\end{theorem}

\begin{proof}
By \eqref{eq2:NWSigma}, the total number of the weight $2$ cosets $\V^{(2)}$ of $\C$ is $\binom{n}{2}(q-1)^2$ that together with  \eqref{eq5:Bd-2Sum} gives rise to \eqref{eq5:wd2identBd-2}. For the rest of relations we apply  \eqref{eq5:Bon_w=2}.
\end{proof}

We establish necessary condition for equality of the weight distributions.

\begin{theorem}\label{th5:necessar}
\textbf{(necessary condition for equality of weight distributions)} Let  $\C$ be an $[n,k,d]_q$ MDS code of distance $d\ge5$.  The necessary condition for equality of the weight distributions of all the weight $2$ cosets of $\C$ is as follows:
\begin{align}\label{eq6_necessar}
\frac{1}{q-1}\binom{n-2}{d-2} \text{ is an integer}.
\end{align}
\end{theorem}

\begin{proof}
All values of $B_w(\V^{(2)})$ must be integer. The assertion follows from~\eqref{eq5:wd2identBd-2}.
\end{proof}

\begin{lemma}\label{lem5:coprime}
    Let $n=q+1$, $d\ge5$. Let $q-1$ be co-prime with $d-2$. Then the necessary condition \eqref{eq6_necessar} holds.
\end{lemma}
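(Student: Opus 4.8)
The plan is to verify the necessary condition \eqref{eq6_necessar} of Theorem \ref{th6_necessar}, namely that $\frac{1}{q-1}\binom{n-2}{d-2}$ is an integer, under the hypotheses $n=q+1$, $d\ge5$, and $\gcd(q-1,d-2)=1$. With $n=q+1$ we have $n-2=q-1$, so the quantity in question is
\begin{align*}
\frac{1}{q-1}\binom{q-1}{d-2}.
\end{align*}
First I would use the elementary identity $\binom{m}{\ell}=\frac{m}{\ell}\binom{m-1}{\ell-1}$, which is just \eqref{eq3_Riordan_ident2} in the form $\binom{m}{\ell}\binom{\ell}{1}=\binom{m}{1}\binom{m-1}{\ell-1}$, applied with $m=q-1$ and $\ell=d-2$. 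This gives
\begin{align*}
(d-2)\binom{q-1}{d-2}=(q-1)\binom{q-2}{d-3},
\end{align*}
hence $(q-1)\mid (d-2)\binom{q-1}{d-2}$.

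Next, since $\gcd(q-1,d-2)=1$ by hypothesis, Euclid's lemma lets me cancel the factor $d-2$: from $(q-1)\mid (d-2)\binom{q-1}{d-2}$ and $\gcd(q-1,d-2)=1$ we conclude $(q-1)\mid\binom{q-1}{d-2}$, i.e. $\frac{1}{q-1}\binom{q-1}{d-2}=\frac{1}{q-1}\binom{n-2}{d-2}$ is an integer. By Theorem \ref{th6_necessar} this is exactly the asserted necessary condition \eqref{eq6_necessar}, so the lemma follows.

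There is essentially no obstacle here; the only point to be a little careful about is the range of validity of the index manipulation — one needs $1\le d-2\le q-1$ for the binomial identity to be used in the stated form, i.e. $3\le d$ and $d\le q+1=n$, both of which are implied by the standing hypotheses ($d\ge5$ and $d\le n$ for any $[n,k,d]_q$ code). If $d-2=0$ were allowed the statement would be trivial anyway, and if $d-2>q-1$ then $\binom{n-2}{d-2}=0$ and the condition holds vacuously, so the argument is robust in the edge cases as well. I would present the proof in two lines: apply $\binom{q-1}{d-2}=\frac{q-1}{d-2}\binom{q-2}{d-3}$ and invoke co-primality of $q-1$ and $d-2$ together with Theorem \ref{th6_necessar}.
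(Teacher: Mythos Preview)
Your proof is correct and follows essentially the same idea as the paper: both arguments boil down to pulling the factor $q-1$ out of $\binom{q-1}{d-2}$ and then using $\gcd(q-1,d-2)=1$ to conclude divisibility. Your formulation via the identity $(d-2)\binom{q-1}{d-2}=(q-1)\binom{q-2}{d-3}$ together with Euclid's lemma is in fact a bit crisper than the paper's version, which argues at the level of the product $(q-1)(q-2)\cdots(q-d+2)$.
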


\begin{proof}
    We have $n-2=q-1$ and $(d-2)|(q-1)(q-2)(q-3)\ldots(q-d+2)$. As $q-1$ is co-prime with $d-2$ we have also $(d-2)|(q-2)(q-3)\ldots(q-d+2)$.
\end{proof}

\begin{example}
    We consider the $[q+1,q-3,5]_q3$ GDRS code $\C$ with $q\not\equiv1\pmod3$.   By Lemma \ref{lem5:coprime}, the condition~\eqref{eq6_necessar} holds. In \cite[Theorem 19]{DMP_CosetsRScod4}, it is proved that all the weight 2 cosets of $\C$ have the same weight distribution. So, in this case,  the necessary condition is also sufficient.
\end{example}

Finally, we prove that similarly to the weight $d-2$ cosets,  different weight distributions of the weight 2 cosets are  symmetrical.
\begin{theorem}\label{th5:symmetr}
\textbf{(symmetry of different weight distributions)}
    Let $\C$ be an $[n,k,d]_q$ MDS code with $d\ge5$. Let $\V^{(2)}_a$ and $\V^{(2)}_b$ be two of its weight $2$ cosets with  different weight distributions. Then independently of the values of $B_{d-2}(\V^{(2)}_a)$ and $B_{d-2}(\V^{(2)}_b)$, there is the following symmetry of the weight distributions:
\begin{align*}
&(-1)^{n+d}B_w(\V^{(2)}_a)-B_{n+d-2-w}(\V^{(2)}_a)=(-1)^{n+d}B_w(\V^{(2)}_b)-B_{n+d-2-w}(\V^{(2)}_b),\displaybreak[3]\\
&w=d-1,d,\ldots,n.\notag
\end{align*}
\end{theorem}

\begin{proof}
The proof is similar to one of Theorem \ref{th4_symmetry_d-2}. Instead of \eqref{eq4_Bon_w=d-2}  we use~\eqref{eq5:Bon_w=2}.
\end{proof}

\section{Arcs in the projective plane $\PG(2,q)$ and the weight distribution of the cosets of MDS codes of distance $d=4$}\label{sec_d=3,4}
\subsection{Preliminaries on arcs in the projective plane}\label{subsec:prelim_arc}
We give some definitions and properties of arcs useful in this paper, see \cite{EtzStorm2016,Hirs_PGFF,HirsStor-2001} and the references therein.
\begin{definition}\label{def6:correspond}
  Let $\A$ be an $n$-arc in $\PG(2,q)$. An $[n,n-3,4]_qR$ MDS code $\C$ is said to be \emph{corresponding to the arc} $\A$ if the columns of its $3\times n$ parity check  matrix are the points of $\A$ in homogenous coordinates. Conversely, the arc $\A$ is \emph{corresponding to the code} $\C$.
\end{definition}
The code $\C$ of Definition \ref{def6:correspond} has covering radius $R=2$ (resp. $R=3$) if the arc $\A$ is complete (resp. incomplete).

In $\PG(2,q)$, a bisecant (resp. unisecant) of an arc is a line having two (resp. one) common points with the arc. Every $n$-arc has, in total, $\binom{n}{2}$ bisecants and $n(q+2-n)$ unisecants; there are $q+2-n$ unisecants in each point of the $n$-arc.

\begin{definition}\label{def6:ci}
 For an arc in $\PG(2,q)$, let $c_i$ be the number of the points off the arc lying on $i$ its bisecants.
\end{definition}

A complete (resp. incomplete) arc has $c_0=0$ (resp. $c_0>0$).

For $i>0$,
we note only non-zero values of~$c_i$; they allow us to obtain the weight distribution of the cosets of the MDS code corresponding to an arc.

The $q+1$ columns of $H_4$  \eqref{eq2_HDRS} (as well as the 1-st $q+1$ columns of $\widetilde{H_4}$ \eqref{eq2_HTRS}) can be treated as the points of a \emph{conic} $\K\subset\PG(2,q)$ in homogenous coordinates. The $q+2$ columns of $\widetilde{H_4}$  form a \emph{regular hyperoval} $\mathcal{H}\subset\PG(2,q)$.
The conic $\K$ and the hyperoval $\mathcal{H}$ have the following properties.

$\bullet$ Let $q$ be odd.
The conic $\K$ is a complete $(q+1)$-arc. Outside $\K$, there are $\N_\text{int}:=(q^2-q)/2$ internal points and $\N_\text{ext}:=(q^2+q)/2$  external points. Every internal and external point lies, respectively, on $\Bb_\text{int}:=(q+1)/2$ and $\Bb_\text{ext}:=(q-1)/2$ bisecants of~$\K$. Also, every external point lies on two unisecants of~$\K$.
So, if $q$ is odd, for the conic $\K$ we have $c_{i_1}=(q^2+q)/2,~i_1=(q-1)/2,~ c_{i_2}=(q^2-q)/2,~i_2=(q+1)/2$.
The matrix $H_4$  is a $3\times(q+1)$ parity check matrix of the $[q+1,q-2,4]_q2$ GDRS code.

$\bullet$ Let $q$ be even.
The conic $\K$ is an incomplete $(q+1)$-arc. Outside $\K$, there are $\N_\text{ev}:=q^2-1$ points all of which lie on $\Bb_\text{ev}:=q/2$ bisecants of $\K$ and on one its unisecant. The so-called nucleus $\mathcal{O}=(0,1,0)^{tr}$ does not lie on any bisecant of $\K$; it is the intersection of all $q+1$ unisecants of~$\K$. So, if $q$ is even, for the conic $\K$ we have $c_0=1,~c_{q/2}=q^2-1$; also, $H_4$ (as well as the 1-st $q+1$ columns of $\widetilde{H_4}$) is a parity check matrix of the $[q+1,q-2,4]_q3$ GDRS code.
The hyperoval $\mathcal{H}$ is a complete $(q+2)$-arc. Every point outside $\mathcal{H}$ lies on $(q+2)/2$ bisecants of $\mathcal{H}$. So, for the regular hyperoval $\mathcal{H}$ we have $c_{(q+2)/2}=q^2-1$; also, $\widetilde{H_4}$ is a parity check matrix of the $[q+2,q-1,4]_q2$ GTRS code.

Note that for any $q \ge 16$, also non-regular hyperovals exist \cite[Theorem 8.37]{Hirs_PGFF}. They have the same parameters as the regular ones but their structures are different. The points of non-regular hyperovals can form parity check matrices of MDS codes which formally are not GTRS, GDRS, or GRS. In this paper, we do not consider the non-regular hyperovals; they will be a subject of our future investigations.

The hyperovals are the unique (up to parameters) complete arcs with only one non-zero value of $c_i$ \cite[Theorem 9.14]{Hirs_PGFF}. All other complete arcs, for even and odd $q$, have at least two non-zero $c_i$.

\looseness=-1 There are several examples (including infinite families with increasing $q$) of complete $n$-arcs in $\PG(2,q)$ with $n<q+1$, see \cite{Hirs_PGFF,HirsStor-2001,HirsThas-2015} and the references therein. These arcs are not conics or hyperovals; the corresponding codes are not GDRS or GTRS.   In the literature, to the best of our knowledge, the values of $c_i$ for infinite families of complete $n$-arcs with $n<q+1$ are not described. But, for some sporadic complete $n$-arcs in $\PG(2,q)$ these values are given, see e.g. in \cite[Section~9]{Hirs_PGFF}:
\begin{align}\label{eq6:specific}
&n=6,q=7,~(c_1,c_2,c_3)=(18,27,6);~n=6,q=8,~(c_1,c_2,c_3)=(36,24,7);\displaybreak[3]\\
&n=6,q=9,~(c_1,c_2,c_3)=(60,15,10);~n=7,q=11,~(c_1,c_2,c_3)=(63,42,21).\notag
\end{align}

\subsection{New results}
In this subsection, for GDRS and GTRS codes with distance $d=4$ and their shortenings, we obtain the weight distribution of the cosets, see Theorems \ref{th6:n=q+1}, \ref{th6:n=q+2}, \ref{th6:n=q_codes}, and \ref{th6:n=q-1_codes}. We use
the results of Section 4 and properties of the arcs corresponding to the codes in accordance to Definition \ref{def6:correspond}. For GDRS and GTRS codes, the needed arc properties are taken from Subsection \ref{subsec:prelim_arc} whereas, for the shortened codes, we obtain these properties in Propositions \ref{prop6:n=q_arcs} and \ref{prop6:n=q-1_arcs}.

\looseness=-1 Theorem \ref{th6:general} establishes connections between properties of $n$-arcs in $\PG(2,q)$ and the weight distributions of the cosets of the corresponding $[n,n-3,4]_qR$ MDS codes.

\begin{theorem}\label{th6:general}
Let the values $c_i$ be as in Definition \ref{def6:ci}. Let $\A$ be an $n$-arc in $\PG(2,q)$ with $c_{i_j}\ne0$, $i_j>0$, $j=1,2,\ldots,m$, $m\ge1$; also, both the cases $c_0=0$ and $c_0\ne0$ are possible. Let $\C$ be an $[n,n-3,4]_qR$ MDS code corresponding to $\A$ in accordance with Definition \ref{def6:correspond}. Then the following holds.
\begin{description}
  \item[(i)] The code $\C$ has $n(q-1)$ cosets $\V^{(1)}$ of weight $1$ with the weight distribution
   \begin{align}\label{eq6:wd1wc}
&B_0(\V^{(1)})=B_2(\V^{(1)})=0,~B_1(\V^{(1)})=1,~B_{3}(\V^{(1)})=\binom{n-1}{3},\displaybreak[3]\\
&B_w(\V^{(1)})=A_{w}(\C)-\Omega_w^{(0)}(n,4)+\Omega_w^{(1)}(n,4) \text{ if } w=4,5,\ldots,n.\notag
\end{align}

  \item[(ii)]  The weight $2$ cosets $\V^{(2)}$ of $\C$ can be partitioned into $m$ classes so that the $j$-th class consists of $(q-1)c_{i_j}$ cosets with the weight distribution
      \begin{align}\label{eq6:Bon_w=d-2}
&B_0(\V^{(2)})=B_1(\V^{(2)})=0,~B_2(\V^{(2)})=c_{i_j},\displaybreak[3]\\
&B_w(\V^{(2)})=A_w(\C)-\Omega_w^{(0)}(n,4)+(-1)^{w}\binom{n-2}{n-w}c_{i_j}\text{ if }w=3,4,\ldots,n.\notag
\end{align}
       Moreover, if $m>1$ then the weight distributions of the cosets of distinct classes are symmetrical in the sense of Theorem \ref{th4_symmetry_d-2}.

  \item[(iii)]  For $c_0=0$, the arc $\A$ is complete, $\C$ is an $[n,n-3,4]_q2$ code, we have no weight $3$ cosets.
  For $c_0\ne0$, the arc $\A$ is incomplete, $\C$ is an $[n,n-3,4]_q3$ code having $(q-1)c_0$ cosets of weight $3$ with the weight distribution
  \begin{align}\label{eq6:Bon_w=d-1}
&B_0(\V^{(3)})=B_1(\V^{(3)})=B_2(\V^{(3)})=0,~B_{3}(\V^{(3)})=\binom{n}{3},\displaybreak[3]\\
&B_w(\V^{(3)})=A_w(\C)-\Omega_w^{(0)}(n,4)\text{ if } w=4,5,\ldots,n.\notag
\end{align}
  \end{description}
\end{theorem}

\begin{proof}
 Every point $P$ of $\PG(2,q)$ gives rise to $q-1$ syndromes of the cosets; they can be obtained multiplying $P$ (in homogeneous coordinates) by elements of~$\F_q^*$.
\begin{description}
  \item[(i)]
 The points of $\A$ (i.e. the $n$ columns of the parity check matrix of $\C$) generate the syndromes of the $n(q-1)$ cosets of weight 1; \eqref{eq6:wd1wc} follows from~\eqref{eq4_wd1wc}.
  \item[(ii)] Obviously, the $c_{i_j}$ points off $\A$ lying on its $i_j$ bisecants give the syndromes of the $(q-1)c_{i_j}$ cosets of weight 2. The distribution \eqref{eq6:Bon_w=d-2} follows from~\eqref{eq4_Bon_w=d-2}.
  \item[(iii)] The points off the incomplete arc $\A$, that do not lie on any bisecant of $\A$, give the syndromes of the $(q-1)c_0$ cosets of weight~$3$;
 \eqref{eq6:Bon_w=d-1} follows from~\eqref{eq4_Bon_w=d-1}.\qedhere
\end{description}
\end{proof}

Using the relations of Section \ref{sec_spec}, Theorem \ref{th6:general}  expresses the weight distributions of the cosets directly by means of the values of $c_i$ characterizing arcs.

In Theorems \ref{th6:n=q+1} and \ref{th6:n=q+2}, we obtain the weight distributions of all the cosets of the MDS codes corresponding to the conic and the regular hyperoval in $\PG(2,q)$.

\begin{theorem}\label{th6:n=q+1}
Let $\C$ be the $[q+1,q-2,4]_qR$ GDRS code  with the parity check matrix $H_4$ \eqref{eq2_HDRS} corresponding to the conic $\K$ in the projective plane $\PG(2,q)$.
\begin{description}
  \item[(i)] Let $q$ be odd. Then the covering radius is $R=2$ and the weight distribution of all the  cosets of $\C$ is as in Theorem \ref{th6:general} for the values  $m=2,~c_{i_1}=(q^2+q)/2,~i_1=(q-1)/2,~ c_{i_2}=(q^2-q)/2,~i_2=(q+1)/2,~c_0=0$.

  \item[(ii)] Let $q$ be even. Then $R=3$ and the weight distribution of all the  cosets of $\C$ is as in Theorem \ref{th6:general} for
  $m=1,~c_{i_1}=q^2-1,~i_1=q/2,~c_0=1$.
\end{description}
\end{theorem}

\begin{proof}
The assertions follow from Theorem \ref{th6:general} and the data in Subsection~\ref{subsec:prelim_arc}.
\end{proof}

\begin{theorem}\label{th6:n=q+2}
Let $q$ be even. Let $\C$ be the $[q+2,q-1,4]_q2$ GTRS code with the parity check matrix $\widetilde{H_4}$ \eqref{eq2_HTRS} corresponding to the regular hyperoval $\mathcal{H}$ in the plane $\PG(2,q)$.
Then the weight distribution of all the  cosets of $\C$ is as in Theorem~\emph{\ref{th6:general}} for the values
  $m=1,~c_{i_1}=q^2-1,~i_1=(q+2)/2,~c_0=0$.
\end{theorem}

\begin{proof}
The assertions follow from Theorem \ref{th6:general} and the data in Subsection~\ref{subsec:prelim_arc}.
\end{proof}

\begin{remark}
  The weight distribution of the cosets of the $[n,n-3,4]_q2$ MDS codes corresponding to the complete arcs from \eqref{eq6:specific} can be obtained directly from Theorem \ref{th6:general} on the base of the values of $c_i$ given in \eqref{eq6:specific}.
\end{remark}

In Propositions \ref{prop6:n=q_arcs} and \ref{prop6:n=q-1_arcs}, for shortened conics, we obtain values of $c_i$ not known in the literature. Using these $c_i$ and Theorem \ref{th6:general}, we obtain the weight distributions of all the cosets of the corresponding MDS codes, see Theorems \ref{th6:n=q_codes} and \ref{th6:n=q-1_codes}.

\begin{proposition}\label{prop6:n=q_arcs}
 Let $q\ge5$. In $\PG(2,q)$, let $P$ be a point of the conic $\K$. Let $\K^*=\K\setminus\{P\}$ be the ``shortened'' conic. For~$\K^*$, the non-zero values of $c_{i_j}$ corresponding to Definition \ref{def6:ci} and Theorem \ref{th6:general} are as follows:
 \begin{description}
   \item[(i)]  Let $q$ be odd. Then $m=2$, $c_{i_1}=(q^2+q)/2,~i_1=(q-1)/2$; $c_{i_2}=(q^2-q)/2$, $i_2=(q-3)/2$; $c_0=1$.

   \item[(ii)]  Let $q$ be even. Then $m=2$, $c_{i_1}=q-1,~i_1=q/2$; $c_{i_2}=q^2-q$, $i_2=(q-2)/2$; $c_0=2$.
  \end{description}
 \end{proposition}

\begin{proof}
 \begin{description}
   \item[(i)]  The point $P$ does not lie on any bisecant of~$\K^*$.
The unisecant  of $\K$ in $P$ contains $q$ external points \cite[Table 8.1]{Hirs_PGFF} lying on $\Bb_\text{ext}$ bisecants of $\K^*$ as well as for $\K$. Each of the remaining $\N_\text{ext} - q$ external points loses one bisecant of $\K$ after removing $P$, i.e. it lies on $\Bb_\text{ext}-1$ bisecants of $\K^*$. Each internal point also loses one bisecant  of $\K$, i.e. it lies on $\Bb_\text{int}-1$ bisecants of~$\K^*$.
Based on the above, the assertion follows.
 Note that $i_2\ge1$ as $q\ge5$.

   \item[(ii)]  The nucleus $\mathcal{O}$ and the point $P$ do not lie on any bisecant of $\K^*$; this gives $c_0=2$.  The unisecant $\U$ of $\K$ in~$P$ contains $\mathcal{O}$ and $q-1$ points lying on $\Bb_\text{ev}$ bisecants of $\K^*$ as well as for~$\K$.  Each point of $\PG(2,q)\setminus(\K^*\cup\U)$ loses one bisecant  of $\K$ after removing $P$, i.e. it lies on $\Bb_\text{ev}-1$ bisecants of $\K^*$. Based on the above, the assertion follows. \qedhere
 \end{description}
\end{proof}

\begin{theorem}\label{th6:n=q_codes}
Let $\C$ be a $[q,q-3,4]_qR$ GRS code  with the parity check matrix  obtained from the matrix $H_4$ \eqref{eq2_HDRS} by removing any one column. Let $q\ge5$. Then the covering radius is $R=3$, the code $\C$ corresponds to the shortened conic $\K^*$ from Proposition \ref{prop6:n=q_arcs}, and the weight distribution of all the  cosets of $\C$ is as in Theorem~\ref{th6:general} for the values $c_{i_j}$ taken from Proposition \ref{prop6:n=q_arcs}.
\end{theorem}

\begin{proof}
The assertions follow from Theorem \ref{th6:general} and Proposition \ref{prop6:n=q_arcs}.
\end{proof}

  If the parity check matrix of the code $\C$ of Theorem \ref{th6:n=q_codes} is $H_4$ \eqref{eq2_HDRS} without the column $(0,0,v_{q+1})^{tr}$ then $\C$ is a \emph{single-extended} $[q,q-3,4]_q3$ GRS code.

\begin{proposition}\label{prop6:n=q-1_arcs}
Let $q\ge7$. In $\PG(2,q)$, let $P_1,P_2\in\K$ be points of the conic~$\K$. Let $\K^{**}=\K\setminus\{P_1,P_2\}$ be the ``double shortened'' conic. Then, for $\K^{**}$, the non-zero values of $c_{i_j}$ corresponding to Definition \ref{def6:ci} and Theorem \ref{th6:general} are as follows:
 \begin{description}
   \item[(i)]  Let $q$ be odd. Then $m=3$, $c_{i_1}=(q+1)/2,~i_1=(q-1)/2$; $c_{i_2}=(q-1)(q+4)/2$, $i_2=(q-3)/2$; $c_{i_3}=(q-1)(q-3)/2$, $i_3=(q-5)/2$; $c_0=2$.

   \item[(ii)]  Let $q$ be even. Then $m=2$, $c_{i_1}=3(q-1),~i_1=(q-2)/2$; $c_{i_2}=(q-1)(q-2)$, $i_2=(q-4)/2$; $c_0=3$.
  \end{description}
 \end{proposition}

 \begin{proof}
 \begin{description}
   \item[(i)] The points $P_1,P_2$ provide $c_0=2$.
Each unisecant $\U_1$ and $\U_2$ of $\K$ in $P_1$ and $P_2$, respectively, contains $q$ external points \cite[Table 8.1]{Hirs_PGFF};  $2(q-1)$ of these points lie on one unisecant and lose one bisecant of $\K$, i.e.\ each of them lies on $\Bb_\text{ext}-1$ bisecants of~$\K^{**}$.   The point $\U_1\cap\U_2$ lies on $\Bb_\text{ext}$ bisecants of~$\K^{**}$.

The bisecant $\overline{P_1,P_2}$ of $\K$ through $P_1$ and $P_2$ contains $\frac{1}{2}(q-1)$ external points as well as internal ones \cite[Table 8.1]{Hirs_PGFF}. Each of these points loses one bisecant of $\K$, i.e. they lie on $\Bb_\text{ext}-1$ and $\Bb_\text{int}-1$ bisecants of $\K^{**}$, respectively.
 Each of the remaining $\N_\text{ext} - 2(q-1)-1-\frac{1}{2}(q-1)$ external points and $\N_\text{int}-\frac{1}{2}(q-1)$ internal ones loses two bisecants, i.e. they lie on $\Bb_\text{ext}-2$ and $\Bb_\text{int}-2$ bisecants of $\K^{**}$.
Based on the above, the assertion follows.
 Note that $i_3\ge1$ as $q\ge7$.

   \item[(ii)]  The nucleus $\mathcal{O}$ and the points $P_1,P_2$  give rise to $c_0=3$.  The unisecants $\U_1$ and $\U_2$ of $\K$ in $P_1$ and $P_2$ contain $\mathcal{O}$  and $2(q-1)$ points every of which loses one bisecant and lies on $\Bb_\text{ev}-1$ bisecants of~$\K^{**}$.  The same properties hold for $q-1$ points of the bisecant $\overline{P_1,P_2}$. The remaining $\N_\text{ev}-2(q-1)-(q-1)$ points  lose  two bisecants after removing $\{P_1,P_2\}$, i.e.\ each of them lies  on $\Bb_\text{ev}-2$ bisecants of $\K^{**}$. Based on the above, the assertion follows.\qedhere
 \end{description}
\end{proof}

\begin{theorem}\label{th6:n=q-1_codes}
Let $\C$ be the $[q-1,q-4,4]_qR$ GRS code with the parity check matrix  obtained from the matrix $H_4$ \eqref{eq2_HDRS} by removing any two columns. Let $q\ge7$.
Then the covering radius is $R=3$, the code $\C$ corresponds to the double shortened conic $\K^{**}$ from Proposition \ref{prop6:n=q-1_arcs}, and the weight distribution of all the  cosets of $\C$ is as in Theorem~\ref{th6:general} for the values $c_{i_j}$ taken from Proposition \ref{prop6:n=q-1_arcs}.
\end{theorem}

\begin{proof}
The assertions follow from Theorem \ref{th6:general} and Proposition \ref{prop6:n=q-1_arcs}.
\end{proof}

  If the parity check matrix of the code $\C$ of Theorem  \ref{th6:n=q-1_codes} is  $H_4$ \eqref{eq2_HDRS} without $(v_{q},0,0)^{tr}$ and $(0,0,v_{q+1})^{tr}$ then $\C$ is a \emph{non-extended} $[q-1,q-4,4]_qR$ GRS code.

The codes considered in this section are GDRS, GTRS, single-extended GRS, and non-extended GRS ones. They are interesting for coding theory.

\section{MDS codes as multiple coverings of deep holes and multiple saturating sets in $\PG(N,q)$}\label{sec_multip}

\subsection{Preliminaries on multiple coverings}\label{subsec:multip_prelim}
We give some definitions and propositions of \cite{BDGMP_MultCov,BDGMP_MultCovFurth,BDMP_TwistCub,CHLL_CovCodBook}, see also the references therein.

\begin{definition}\label{def7_multcov}
\begin{description}
  \item[(i)] An $[n,k,d]_{q}R$ code $\C$ is said to be an \emph{$(R,\mu )$
multiple covering of the farthest-off points} ($(R,\mu )$-MCF
code for short) if for all $\x\in \F_{q}^{n}$ with $ d(\x,\C)=R$ we have $f_R(\x,\C)\ge\mu$. The parameter $\mu$ is called the multiplicity of covering.
  \item[(ii)] An $[n,k,d(\C)]_{q}R$ code $\C$ is said to be an \emph{$(R,\mu )$
almost perfect} multiple covering of the farthest-off points
($(R,\mu )$-APMCF code for short) if for all $\x\in \F_{q}^{n}$ with $ d(\x,\C)=R$ we have $f_R(\x,\C)=\mu$. If,
in addition, $d(\C)\geq2R$, then the code is called
\emph{$(R,\mu )$  perfect} multiple covering of the farthest-off
points $((R,\mu )$-PMCF code for short).
\end{description}
\end{definition}

In the literature, MCF codes are also called \emph{multiple coverings of deep holes}.

The covering quality of an $[n,k,d(\C)]_{q}R$ $(R,\mu )$-MCF code $\C$ is characterized by its \emph{$\mu$-density} $\gamma _{\mu }(\C,R,q)\ge1$ that is the average value of $f_R(\x,\C)$ divided by $\mu$ where the average is calculated over all $\x\in \F_{q}^{n}$ with $ d(\x,\C)=R$. For APMCF and PMCF codes we have $\gamma _{\mu }(\C,R,q)=1$. Also,
\begin{align}
&\gamma _{\mu }(\C,2,q)=\frac{{\binom{n}{2}}(q-1)^{2}}{\mu (q^{n-k}-1-n(q-1))}\text{ if }d(\C)>
3.  \label{eq7_mudens}
\end{align}
From the covering problem point of view, the best codes are those with small $\mu$-density.

If the $\mu$-density $\gamma_\mu(\C,R,q)$ tends to 1 when $q$ tends to infinity we have an \emph{asymptotically optimal collection of MCF codes} or, in another words, an \emph{asymptotically optimal multiple covering}.

\begin{definition}
Let $S$ be a subset of points of
$\PG(N,q)$. Then $S$ is said to be $(\rho ,\mu )$-saturating if the following holds:
$S$ generates $\PG(N,q)$; there exists a point $Q$ in $\PG(N,q)$ which does not belong to
any subspace of dimension $\rho -1$ generated by the points of $S$;
and, finally, every point $Q\in\PG(N,q)$ not belonging to any subspace of
dimension $\rho -1$ generated by the points of $S$, is such that the number
of subspaces of dimension $\rho $ generated by the points of $S$ and
containing $Q$, counted with multiplicity, is at least $\mu $. The
multiplicity $m_{T}$ of a subspace $T$ is computed as
the number of distinct sets of $\rho +1$
independent points contained in $T\cap S$. If any $\rho +1$ points of $S$ are linearly independent, then
$m_{T}={\binom{{\#(T\cap S)}}{{\rho +1}}.}$
\end{definition}

\begin{definition}
\label{def OS}Let $S$ be a $(\rho ,\mu)$-saturating set in
$ \PG(N,q). $ The set $S$ is called \emph{optimal }$(\rho
,\mu)$-{\em saturating set }$((\rho ,\mu )$-OS set for
short) if every point $Q$ in $\PG(N,q)$ not belonging to
any subspace of dimension $\rho-1$ generated by the points of
$S$, is such that the number of subspaces of dimension $\rho $
generated by the points of $S$ and containing $Q$, counted with multiplicity, is
exactly $\mu $.
\end{definition}

\begin{proposition}\label{prop7}
    Let $\C$ be an $[n,k,d=n-k+1]_{q}R$ MDS code with an $(n-k)\times n$ parity check matrix $H$. Let $\C$ be an $(R,\mu)$-MCF \emph{(}resp. $(R,\mu)$-APMCF\emph{)} code. Let $S$ be the $n$-set of points in $\PG(n-k-1,q)$ such that its points in homogeneous coordinates are columns of $H$. Then $S$ is an $(R-1,\mu)$-saturating \emph{(}resp. $(R-1,\mu)$-OS\emph{)} set corresponding to $\C$.
\end{proposition}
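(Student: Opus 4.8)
The plan is to pass through the syndrome dictionary and then run a weighted double count. Put $N=n-k-1$, $\rho=R-1$, and let $h_i$ denote the $i$-th column of $H$, so that $S=\{\langle h_1\rangle,\dots,\langle h_n\rangle\}\subset\PG(N,q)$. First I would record the standard correspondences: a vector $\x\in\F_q^n$ has $d(\x,\C)=R$ exactly when its coset $\V=\x+\C$ has weight $R$, and then $f_R(\x,\C)=B_R(\V)$, a number that depends only on $\V$ and hence, by Theorem \ref{th2_HP_coset}(i), only on the projective point $Q=\langle H\x^{tr}\rangle\in\PG(N,q)$. Writing $s=H\x^{tr}$, the weight-$R$ vectors of $\V$ are precisely the vectors of $\F_q^n$ of weight $R$ and syndrome $s$; equivalently, each such vector is the datum of a support $I\subseteq\{1,\dots,n\}$ with $\#I=R$ together with scalars $c_i\in\F_q^*$ satisfying $s=\sum_{i\in I}c_ih_i$.

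Next I would use the MDS hypothesis geometrically. Since $d=n-k+1$, any $n-k$ columns of $H$ are linearly independent; as $R\le d-1=n-k$, any $R$ points of $S$ are independent. Hence $S$ is an $n$-arc, the representation $s=\sum_{i\in I}c_ih_i$ (for $\#I=R$) is unique whenever it exists, the $(\rho-1)=(R-2)$-dimensional subspaces generated by points of $S$ are exactly the spans of the $(R-1)$-subsets of $S$, and for a $\rho$-dimensional subspace $T$ generated by points of $S$ the multiplicity is $m_T=\binom{\#(S\cap T)}{\rho+1}$, i.e.\ the number of $R$-subsets of $S$ spanning $T$. The structural clauses in the definition of a $(\rho,\mu)$-saturating set also hold: $S$ generates $\PG(N,q)$ because $\mathrm{rank}\,H=n-k$, and a point lying on no $(R-2)$-dimensional subspace generated by $S$ exists because the covering radius equals $R$. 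Finally, $Q=\langle s\rangle$ lies on no $(R-2)$-dimensional subspace generated by $S$ iff $s$ is not a combination of at most $R-1$ columns of $H$, i.e.\ iff the coset with syndrome $s$ has weight $\ge R$, i.e.\ (the covering radius being $R$) weight exactly $R$, i.e.\ $d(\x,\C)=R$ for the corresponding $\x$.

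The core of the proof is the identity, valid for each such $Q$,
\[
f_R(\x,\C)=\sum_{T\ni Q}m_T ,
\]
the sum ranging over $\rho$-dimensional subspaces $T$ generated by points of $S$ and containing $Q$. By the multiplicity formula, the right-hand side equals the number of $R$-subsets $I\subseteq S$ with $s\in\langle h_i:i\in I\rangle$. For such an $I$, write $s=\sum_{i\in I}c_ih_i$ (this representation being unique); if some $c_i=0$, then $s$ would be a combination of at most $R-1$ columns, contradicting $d(\x,\C)=R$, so all $c_i\ne0$, and the weight-$R$ vector supported on $I$ with entries $c_i$ has syndrome $s$, hence belongs to $\V$. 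Conversely, each weight-$R$ vector of $\V$ gives, via its support, such an $I$, and (by independence of $\{h_i\}_{i\in I}$) distinct weight-$R$ vectors give distinct supports. This bijection proves the identity. Consequently the MCF requirement $f_R(\x,\C)\ge\mu$ for every deep hole $\x$ is precisely the statement that every admissible $Q$ lies on at least $\mu$ of the $\rho$-dimensional subspaces generated by $S$, counted with multiplicity; that is, $S$ is $(\rho,\mu)=(R-1,\mu)$-saturating. Replacing ``$\ge\mu$'' by ``$=\mu$'' (the APMCF hypothesis) gives exactly the defining property of an $(R-1,\mu)$-OS set.

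I expect the only point needing real care to be the displayed identity: one must make sure the geometric count — $\rho$-flats through $Q$ spanned by $S$, weighted by multiplicity — neither over- nor under-counts the weight-$R$ vectors of the coset. What makes it go through cleanly is exactly that the hypothesis $d(\x,\C)=R$ forces every expression of $s$ as a combination of precisely $R$ columns to have all coefficients nonzero, so that each relevant $R$-flat genuinely contributes a coset vector, together with the uniqueness of that expression coming from the arc property. The remaining items — the structural clauses of the definitions and the identification of ``deep hole'' with ``admissible $Q$'' — are routine.
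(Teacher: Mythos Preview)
The paper does not actually prove this proposition: it appears in Section~\ref{subsec:multip_prelim} (``Preliminaries on multiple coverings''), which opens by stating that the definitions and propositions there are taken from \cite{BDGMP_MultCov,BDGMP_MultCovFurth,BDMP_TwistCub,CHLL_CovCodBook}. So there is no in-paper argument to compare against; the proposition is simply quoted from the literature.

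Your proof is correct and supplies precisely the details one would expect behind such a citation. The key ingredients are all in place: the syndrome map identifies weight-$R$ cosets with the ``admissible'' projective points $Q$ (those off every $(\rho-1)$-flat spanned by $S$); the MDS hypothesis guarantees that any $R\le n-k$ columns of $H$ are independent, so $S$ is an arc and the simplified multiplicity formula $m_T=\binom{\#(S\cap T)}{\rho+1}$ applies; and your displayed identity $f_R(\x,\C)=\sum_{T\ni Q}m_T$ is established by the clean bijection between weight-$R$ vectors of syndrome $s$ and $R$-subsets $I\subseteq S$ whose span contains $Q$. The point you flag as needing care --- that every coefficient $c_i$ in the unique expression $s=\sum_{i\in I}c_ih_i$ must be nonzero because the coset has weight exactly $R$ --- is handled correctly. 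The structural clauses of the saturating-set definition (that $S$ spans $\PG(N,q)$ and that an admissible $Q$ exists) follow, as you say, from $\mathrm{rank}\,H=n-k$ and from the covering radius being $R$.
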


Proposition \ref{prop7} allows us to consider linear MDS $(R,\mu )$-MCF (resp. $(R,\mu)$-APMCF) codes as $(R-1,\mu )$-saturating (resp. $(R-1,\mu)$-OS) sets and vice versa. Thus, there is the one-to-one correspondence between MCF codes and multiple saturating sets in the projective spaces $\PG(N,q)$.

\subsection{New results}\label{subsec:multip_newres}
In this subsection, using Sections \ref{sec_spec}  and \ref{sec_d=3,4}, we obtain a few new results.

 \begin{lemma}\label{lem7}
    Let $\C$ be an $[n,k,d(\C)]_{q}R$ code and $\V^{(R)}$ be one of its weight $R$ cosets.
    \begin{description}
      \item[(i)] Any vector $\x$ of the coset $\V^{(R)}$ is a farthest-off vector (deep hole) for $\C$. The number of codewords on distance $R$ from $\x$ is equal to $B_R(\V^{(R)})$.
      \item[(ii)] If for all cosets $\V^{(R)}$ we have $B_R(\V^{(R)})\ge\mu$ then $\C$ is an $(R,\mu )$-MCF code.
      \item[(iii)] If for all cosets $\V^{(R)}$ we have $B_R(\V^{(R)})=\mu$ then $\C$ is an $(R,\mu )$-APMCF  code; moreover, if,
in addition, $d(\C)\geq2R$, then $\C$ is an $(R,\mu )$-PMCF code.
    \end{description}
 \end{lemma}

\begin{proof}
For any vector $\x\in\V^{(R)}$ we have $d(\x,\C)=R$, $f_R(\x,\C)=B_R(\V^{(R)})$, see Section \ref{subsec_cosets}.
This proves the case (i). The cases (ii), (iii) follow from Definition~\ref{def7_multcov}.
\end{proof}

Lemma \ref{lem7} establishes natural connections between the weight distributions of the weight $R$ cosets of $[n,k,d]_{q}R$ codes and the parameters of MCF ones.

The following Theorem \ref{th7_R=d-1} shows that there is a wide class of MDS codes for which the weight distributions of the weight $R$ cosets can be obtained on the base of the results of Sections  \ref{sec_spec}  and \ref{sec_d=3,4}.

\begin{theorem}\label{th7_R=d-1}
 Let $\C$ be an  $[n,k,d]_qR$ MDS code of distance $d\ge3$, covering radius $R=d-2$, and with a parity check matrix $H_\C$. In particular, $\C$ could be the GDRS code with the parity check matrix $H_d$~\eqref{eq2_HDRS}, or the GTRS code with the parity check matrix $\widetilde{H_4}$ \eqref{eq2_HTRS}, or one of the codes of \cite{BartGiulPlat}. Assume that we remove $\Delta$ columns from $H_\C$, $1\le\Delta\le n-d$. Then we obtain a parity check matrix $H_{\C_\Delta}$ of an $[n-\Delta,k-\Delta,d]_{q}R_\Delta$ MDS code $\C_\Delta$ of covering radius $R_\Delta=d-1$. Moreover, all the weight $R_\Delta$ cosets $\V^{(R_\Delta)}$ of $\C_\Delta$ have the same weight distribution $B_w(\V^{(R_\Delta)})=B_w(\V^{(d-1)})$ of the form \eqref{eq4_Bon_w=d-1} with $B_{R_\Delta}(\V^{(R_\Delta)})=\binom{n}{d-1}$.
\end{theorem}

\begin{proof}
The columns, removed from $H_\C$, are not a linear combination of $\le d-2$ columns of $H_\C$, otherwise the distance of $\C$ would be $<d$. But the removed columns can be obtained as a linear combination of $d-1$ columns of $H_{\C_\Delta}$ since any $d$ columns of $H_\C$ are linearly dependent. So,  $R_\Delta=d-1$ and we may use Theorem \ref{th4_W=d-1}.
\end{proof}

\begin{theorem}\label{th7_R=d-1_2}
Let $\C$ be an $[n,k,d]_qR$ MDS code of distance $d\ge3$ and covering radius $R=d-1$; in particular, it could be the code $\C_\Delta$ of Theorem \ref{th7_R=d-1}. Then there are exactly $\binom{n}{d-1}$ codewords at distance $R$ from every farthest-off vector (deep hole), i.e.  $\C$ is a $(d-1,\mu)$-APMCF code with $\mu=\binom{n}{d-1}$.
\end{theorem}

\begin{proof}
    By Theorem \ref{th4_W=d-1}, all the weight $R$ cosets $\V^{(R)}$ of $\C$ have the same weight distribution with $B_{R}(\V^{(R)})=\binom{n}{d-1}$. Now we use  Lemma~\ref{lem7}(iii).
\end{proof}

Thus, MDS codes of covering radius $R=d-1$ are almost perfect MCF codes. By Theorem \ref{th7_R=d-1}, any shortening of an MDS code gives a code with $R=d-1$, i.e. these codes form a wide class and they can be constructed for any $R$. This is an important result as in the literature, see e.g. \cite{BDGMP_MultCov,BDGMP_MultCovFurth,BDMP_TwistCub,CHLL_CovCodBook},  almost perfect  MCF codes (including MDS ones) are described only for $R\le3$.

Note also that perfect  MCF codes (PCMF) have rarely appeared in the literature. In \cite[Proposition 4]{BDGMP_MultCovFurth}, it is shown that the code corresponding to the hyperoval is PCMF.  To the authors' knowledge, it is the only known MDS code with such a property. The following proposition proves that this code is unique.

\begin{proposition}\label{prop7_PMCF}
Let $q$ be even. Let $\C$ be the  $[q+2,q-1,4]_q2$ GTRS code with the parity check matrix  $\widetilde{H_4}$ \eqref{eq2_HTRS} corresponding to the regular hyperoval $\mathcal{H}$ in $\PG(2,q)$.
Then all the farthest-off vectors (deep holes) are at distance $2$ from $(q+2)/2$ codewords, i.e.  $\C$ is a $(2,\mu)$-PMCF code with $\mu=(q+2)/2$. Moreover, $\C$ is the unique (up to parameters) $[n,n-3,4]_q2$ code such that it can be viewed as a $(2,\mu)$-PMCF one.
\end{proposition}

\begin{proof}
By Theorems \ref{th6:general} and \ref{th6:n=q+2}, for all the weight 2 cosets of $\C$  we have $B_{2}(\V^{(2)})=(q+2)/2$. Then we use Lemma \ref{lem7}(iii).
The uniqueness of $\C$ follows from \cite[Theorem~9.14]{Hirs_PGFF} where it is proved that if all the points off a complete $n$-arc in $\PG(2,q)$ lie on $i$ its bisecants then $q$ is even, $i=(q+2)/2$, and $n=q+2$.
\end{proof}

In \cite{BDMP_TwistCub}, an asymptotical optimal collection of MCF codes with distance $d=5$ is obtained. The following proposition gives a new example with $d=4$.
\begin{proposition}\label{prop7_MCF}
Let $q$ be odd. Let $\C$ be the  $[q+1,q-2,4]_q2$ GDRS code with the parity check matrix  $H_4$ \eqref{eq2_HDRS} corresponding to the conic $\K$ in $\PG(2,q)$.
Then $q^{q-2}(q-1)(q^2+q)/2$ and $q^{q-2}(q-1)(q^2-q)/2$ deep holes are at distance~$2$ from $(q-1)/2$ and $(q+1)/2$ codewords, respectively. The code $\C$ is a $(2,\mu)$-MCF with $\mu=\frac{1}{2}(q-1)$ and $\mu$-density $\gamma _{\mu }(\C,2,q)=1+\frac{1}{q}$ which tends to~$1$ when $q$ tends to infinity.
          Thereby, we have an asymptotical optimal collection of MCF codes.
\end{proposition}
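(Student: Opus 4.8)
The plan is to combine the structural description of the weight-$2$ cosets of this code from Theorem \ref{th5_n=q+1}(i) with Lemma \ref{lem7}(ii) and the $\mu$-density formula \eqref{eq2_mudens}. First I would recall that, for odd $q$, Theorem \ref{th5_n=q+1}(i) states that the $[q+1,q-2,4]_q$ GDRS code $\C$ with parity check matrix $H_4$ has covering radius $R=2$, so its farthest-off cosets are exactly the weight-$2$ cosets $\V^{(2)}$. Moreover, every such coset satisfies $B_2(\V^{(2)})\in\{\Bb_\text{int},\Bb_\text{ext}\}=\{\tfrac12(q+1),\tfrac12(q-1)\}$, according to whether the corresponding point of $\PG(2,q)$ is internal or external to the conic $\K$. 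Hence $B_2(\V^{(2)})\ge\tfrac12(q-1)=\mu$ for every weight-$2$ coset, and Lemma \ref{lem7}(ii) immediately yields that $\C$ is a $(2,\mu)$-MCF code with $\mu=\tfrac12(q-1)$. Note that $\C$ is \emph{not} an APMCF code, since the internal-point cosets have $B_2=\tfrac12(q+1)>\mu$; this is precisely why the $\mu$-density will exceed $1$.

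Next I would compute the $\mu$-density from \eqref{eq2_mudens}, which applies since $d(\C)=4>3$. Here $n=q+1$ and $n-k=3$, so $q^{n-k}=q^3$ and $n(q-1)=(q+1)(q-1)=q^2-1$; thus the number of farthest-off cosets is $q^3-1-(q^2-1)=q^2(q-1)$. Substituting $\binom{n}{2}=\tfrac12 q(q+1)$ and $\mu=\tfrac12(q-1)$ into \eqref{eq2_mudens} gives
\begin{align*}
\gamma_\mu(\C,2,q)=\frac{\tfrac12 q(q+1)(q-1)^2}{\tfrac12(q-1)\cdot q^2(q-1)}=\frac{q+1}{q}=1+\frac{1}{q}.
\end{align*}
Letting $q\to\infty$ through odd prime powers, we get $\gamma_\mu(\C,2,q)\to1$, so this collection of GDRS codes is an asymptotically optimal multiple covering, as claimed.

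I do not expect any serious obstacle: the statement is a direct corollary of Theorem \ref{th5_n=q+1}(i), Lemma \ref{lem7}(ii), and the closed form \eqref{eq2_mudens}. The only point requiring a little care is the bookkeeping in the $\mu$-density computation — checking that the denominator $q^{n-k}-1-n(q-1)$ equals $q^2(q-1)$ and that the factor $(q-1)^2$ cancels against it to leave $(q+1)/q$ — together with the observation that the farthest-off cosets are precisely the weight-$2$ cosets because $R=2$.
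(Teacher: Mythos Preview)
Your proof is correct and follows exactly the same approach as the paper: invoke Theorem~\ref{th5_n=q+1}(i) to get $B_2(\V^{(2)})\ge\tfrac12(q-1)$ for all weight-$2$ cosets, and then compute the $\mu$-density via~\eqref{eq2_mudens}. The paper's proof is terser (it omits the explicit arithmetic you carry out), but the logic is identical.
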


\begin{proof}
Every point $P$ off $\K$ gives rise to $q-1$ syndromes of weight 2 cosets of $\C$; they can be generated by multiplying $P$ (in homogeneous coordinates) by elements of~$\F_q^*$. Each coset contains $q^{q-2}$ vectors. By Theorem \ref{th6:n=q+1}(i), there are $(q^2+q)/2$ and $(q^2-q)/2$ points off $\K$  on $(q-1)/2$ and $(q+1)/2$ its bisecants, respectively. Thus, by Theorems \ref{th6:general} and  \ref{th6:n=q+1}(i), for the weight 2 cosets of $\C$ we have  $B_{2}(\V^{(2)})\ge(q-1)/2$.  Then we use Lemma \ref{lem7}(ii). The $\mu$-density is calculated by~\eqref{eq7_mudens}.
\end{proof}

The farthest-off vectors (deep holes) of a code play an important role in decoding. Their classification is interesting. In \cite{KaipaIEEE2017,ZDK_DHRIEEE2019}, see also the references therein, the classification is based on the consideration of the deep holes as polynomials.

On the other hand, the number of codewords at the distance $R$\/ from a deep hole and  the weight distribution of the deep holes, see Theorems~\ref{th7_R=d-1},~\ref{th7_R=d-1_2} and  Propositions \ref{prop7_PMCF}, \ref{prop7_MCF}, can also be treated as a classification. In addition, it is useful to consider  the total number (or its estimate) of the deep holes. In Propositions \ref{prop7_PMCF}, \ref{prop7_MCF} this number is given. For the codes of Theorem \ref{th7_R=d-1} we prove the following.

\begin{theorem}\label{th7_NN}
Let $\C$ and $\C_\Delta$ be, respectively, the $[n,k,d]_qR$ and $[n-\Delta,k-\Delta,d]_{q}R_\Delta$ MDS codes of Theorem \ref{th7_R=d-1} with distance $d\ge3$ and covering radii $R=d-2$, $R_\Delta=d-1$. Then the total number $\mathbb{N}_\Sigma^{(d-1)}(\C_\Delta)$ of the weight $d-1$ cosets of the code $\C_\Delta$ is $\mathbb{N}_\Sigma^{(d-1)}(\C_\Delta)\ge(q-1)\Delta$ and the total number of the deep holes of $\C_\Delta$ is equal to $q^{k-\Delta}\mathbb{N}_\Sigma^{(d-1)}(\C_\Delta)$. Also, regarding $\mathbb{N}_\Sigma^{(d-1)}(\C_\Delta)$ the following holds:
\begin{description}
  \item[(i)]
 Let $\C$ be the  $[q+1,q-d+2,d]_q2$ GDRS code with the parity check matrix  $H_d$~\eqref{eq2_HDRS}, $d\in\{3,4\}$.
Then $\mathbb{N}_\Sigma^{(d-1)}(\C_\Delta)=(q-1)\Delta$ if $\Delta\le q+1-n_d(q)$ where
\begin{align}\label{eq7:ndq}
 n_3(q)= 3 ~\text{for any }q ,~n_4(q)= (q+5)/2 ~\text{for odd }q.
 \end{align}

  \item[(ii)]
Let $q$ be even. Let $\C$ be the  $[q+2,q-1,4]_q2$ code with the parity check matrix  $\widetilde{H_4}$ \eqref{eq2_HTRS}.
Let $\Delta\le q+2-(1+(q+4)/2)$ and let the column $(0,v_2,0)^{tr}$ corresponding to the hyperoval nucleus be removed. Then $\mathbb{N}_\Sigma^{(d-1)}(\C_\Delta)=\linebreak (q-1)\Delta$.
\end{description}
\end{theorem}

\begin{proof}
Similarly to the proof of Theorem \ref{th7_R=d-1}, we see that each column removed from $H_C$ can be obtained as a linear combination of $d-1$ columns of $H_{\C_\Delta}$. Also, each removed column gives rise to $q-1$ syndromes of the weight $d-1$ cosets. Therefore, $\mathbb{N}_\Sigma^{(d-1)}(\C_\Delta)\ge(q-1)\Delta$. Moreover, in the cases (i) and (ii), we have no other weight $d-1$ cosets. Each coset contains $q^{k-\Delta}$ vectors.
\begin{description}
  \item[(i)]
 For $d=3$ we consider the shortened $[q+1-\Delta,q-1-\Delta,3]_q2$ Hamming code. There are $q^2$ its cosets, one of which is the code itself and  $(q+1-\Delta)(q-1)$ ones have weight 1. The remaining $(q-1)\Delta$ cosets have weight $2=d-1$.

 Let $q$ be odd, $d=4$. We use the approaches of Section~\ref{sec_d=3,4}. Let $\K_\Delta$ be the arc obtained by removing $\Delta$ points from the conic~$\K$. We consider the $[q+1-\Delta,q-2-\Delta,4]_q3$ code $\C_\Delta$ corresponding to $\K_\Delta$. Each point off $\K$ lies on at least $\mathbb{B}_\text{ext}=(q-1)/2$ its bisecants. By \eqref{eq7:ndq}, $\mathbb{B}_\text{ext}-\Delta\ge1$. So, only the points removed from $\K$ generate the syndromes of the weight~3 cosets.

 \item[(ii)]
 We do similarly to the case (i). We consider the conic $\K$ obtained removing the nucleus, the arc $\K_{\Delta-1}$,  and the code $\C_\Delta$. Every point off $\K$ lies on  $\mathbb{B}_\text{ev}=q/2$ its bisecants. We have $\mathbb{B}_\text{ev}-(\Delta-1)\ge1$. So, only the points removed from $\K$ together with the nucleus generate the syndromes of the weight 3 cosets.
 \qedhere
\end{description}
\end{proof}

In conclusion of this section, we use the one-to-one correspondence between MCF codes and multiple saturating sets in $\PG(N,q)$.

\begin{theorem}\label{th7_R=d-1_2ccc}
Let $\C$ be the $[n,k,d]_qR$ MDS code of Theorem \ref{th7_R=d-1_2} that is a $(d-1,\mu)$-APMCF code with $d\ge3$, $R=d-1$, and $\mu=\binom{n}{d-1}$.  Let $S$ be the $n$-set of points in $\PG(n-k-1,q)$ such that its points in homogeneous coordinates are columns of the parity check matrix of $\C$.  Then   $S$ is a $(\rho,\mu)$-OS set with $\rho=d-2$.
\end{theorem}

\begin{proof}
    We use Proposition \ref{prop7}.
\end{proof}

By Theorem \ref{th7_R=d-1_2ccc}, we can construct optimal $(\rho,\mu)$-saturating sets for any $\rho$. Note that in the literature, see e.g. \cite{BDGMP_MultCov,BDGMP_MultCovFurth,BDMP_TwistCub,CHLL_CovCodBook},  $(\rho,\mu)$-OS sets are described only for $\rho\le2$.

\section{Conclusion. Open problems}
At the present time, the weight distributions of the cosets of MDS codes are investigated insufficiently. For non-binary
MDS codes,  sporadic or infinite families, the weight distributions of the code cosets (without any unions) are not considered, apart from the case of \cite{DMP_CosetsRScod4}.
In this paper some steps are made to fill this gap. In whole, the paper extends our knowledge on the problems connected with the weight distributions of the cosets of MDS codes.

We transformed the current best formula of \cite{Bonneau1990} for the coset weight distributions so that the new version is more convenient for applying and requires less calculations.

The new variant of the formula allowed us to make further simplifications for distinct specific cases connected with weights of the cosets. We considered separately cosets of weights $W$ with $W=1,2,d-1,d-2$, and $2\le W\le \lfloor(d-1)/2\rfloor$. In this consideration, for the cosets of weights $1$ and  $d-1$, we obtained formulas of the weight distributions depending only on the code parameters. This proves that all the MDS code cosets of weight $1$ (as well as $d-1$) have the same weight distribution.
The weight 2 and $d-2$ cosets may have different weight distributions; in this case, the distributions, as we proved, are symmetrical. A necessary condition for equality of the weight distributions of weight 2 cosets is established.

The weight distribution of the cosets of MDS codes corresponding to the conics and  regular hyperovals in the projective plane $\PG(2,q)$ was also considered. New properties of shortened conics, needed for the coset weight distributions, are obtained.

New results on multiple coverings of the farthest-off points (deep holes) called MCF-codes was obtained. The results can be considered as the classification of the deep holes. We showed that any MDS code of covering radius $R=d-1$ is an almost perfect MCF-code corresponding to an optimal multiple saturating set in the projective space $\PG(N,q)$. For the first time in the literature, almost perfect MCF-codes are constructed for arbitrary covering radii $R>3$.

Many interesting problems remain unresolved and could be investigated using approaches of this paper. As examples of the open problems, we note the following.

$\bullet$ Understand the combinatorial sense (in the context of
the problems considered) of the coefficients $\Omega_w^{(v)}(n,d)$ in \eqref{eq1_BonGenTransf} and obtain the formula directly without using the paper \cite{Bonneau1990}.

$\bullet$ Find the situations when the necessary condition of the equality of the weight distributions of weight 2 cosets is also sufficient, see Section \ref{sec_wd2}.

$\bullet$ Find the exact number of the weight $d-1$ cosets in the shortened GDRS codes of distance $d\ge5$, see Section \ref{sec_multip}.

$\bullet$ Find the values $c_i$ for infinite families of complete $n$-arcs, $n<q$, in the projective plane $\PG(2,q)$, see Section \ref{subsec:prelim_arc}.

$\bullet$ Investigate the cosets of codes corresponding to non-regular hyperovals in the projective plane $\PG(2,q)$, see Section \ref{subsec:prelim_arc}.

$\bullet$ Find the numbers $B_{d-2}(\V^{(W)})$ of weight $d-2$ vectors in the weight $W$ cosets of GDRS codes with $W=2,d-2$, see Sections \ref{sec_spec} and \ref{sec_wd2}.

\section*{Acknowledgments}   The research of  S. Marcugini and F.~Pambianco was
 supported in part by the Italian
National Group for Algebraic and Geometric Structures and their Applications (GNSAGA - INDAM) ((Contract No. U-UFMBAZ-2019-000160) and by
University of Perugia (Project No. 98751: Strutture Geometriche, Combinatoria e loro Applicazioni, Base Research Fund 2017-2019).

 \end{document}